\documentclass[a4paper,11pt]{article}
\pdfoutput=1 %
\usepackage{jheppub} %
\usepackage[T1]{fontenc} %
\usepackage{amsthm}
\usepackage{xcolor}
\usepackage[normalem]{ulem}
\usepackage{comment}
\newtheorem{theorem}{Theorem}[section]

\newtheorem{definition}{Definition}[section]
\newcommand{\dd}{\mathrm{d}}
\newcommand{\deriv}[2]{\dfrac{\dd #1}{\dd #2}}

\title{\boldmath On the Asymptotic Causal Structure in Gravitational EFTs} 

\author[a, b, c]{Bruno Bucciotti,}
\author[d, e, f]{Paolo Creminelli,}
\author[f,g,h]{Alessandro Longo,}
\author[f, g]{Warin Patrick McBlain,}
\author[a, b]{Enrico Trincherini}

\affiliation[a]{Scuola Normale Superiore, Piazza dei Cavalieri 7, 56126, Pisa, Italy}
\affiliation[b]{INFN, Sezione di Pisa, Largo Pontecorvo 3, 56127, Pisa, Italy}
\affiliation[c]{Department of Physics and Beyond: Center for Fundamental Concepts in Science, Arizona State University, 650 E. Tyler Mall,
Tempe, AZ 85281, USA}
\affiliation[d]{ICTP, International Centre for Theoretical Physics, Strada Costiera 11, 34151, Trieste, Italy}
\affiliation[e]{IFPU - Institute for Fundamental Physics of the Universe, Via Beirut 2, 34014, Trieste, Italy}
\affiliation[f]{INFN, Sezione di Trieste, via Valerio 2, 34127 Trieste, Italy}
\affiliation[g]{SISSA, International School for Advanced Studies, via Bonomea, 265, 34136 Trieste, Italy}
\affiliation[h]{Université Paris Cité, CNRS, AstroParticule et Cosmologie, 10 rue Alice Domon et Léonie Duquet, 75013 Paris, France}

\emailAdd{bbucciot@asu.edu}
\emailAdd{creminel@ictp.it}
\emailAdd{alongo@apc.in2p3.fr}
\emailAdd{wmcblain@sissa.it}
\emailAdd{enrico.trincherini@sns.it}

\abstract{It is usually assumed that a healthy EFT should not allow superluminal propagation. In the presence of gravity, however, the notion of superluminality becomes subtle, since there is no invariant way to compare with an \emph{underlying} Minkowski light cone. One can instead resort to an asymptotic criterion: whether the EFT can induce signal propagation faster than what allowed by the asymptotic structure of spacetime.
In this work we study the asymptotic causal structure of gravitational EFTs by analysing signal propagation in black-hole backgrounds in the presence of higher-derivative operators. We show that in spacetime dimensions $D>4$ the effective light cones can lead to genuine asymptotic superluminality, which can be used to constrain the regime of validity of the EFT. By contrast, in $D=4$ the asymptotic causal structure is universally identical to that of Schwarzschild: prompt null curves remain insensitive to higher-derivative corrections and no asymptotic time advance is possible.
We first study the representative operator $R_{\mu\nu\rho\sigma}F^{\mu\nu}F^{\rho\sigma}$, then show that this conclusion is true for any EFT, as it relies only on the asymptotic behaviour of the metric. Finally, we discuss two ways to define superluminality in $D=4$ spacetimes: introducing a covariant cut-off by putting the theory in an asymptotically-AdS background, or imposing a hard cut-off by working at finite distance.
}

\begin{document}

\maketitle
\flushbottom

\section{Introduction}
\label{sec:intro}

A fundamental requirement of any consistent relativistic theory is that signals propagate causally. Starting from the seminal work of Ref.~\cite{Adams:2006sv}, this principle has been shown to have surprisingly powerful consequences, particularly when applied to effective field theories (EFTs).

One approach to probing causality in an EFT is to study the propagation of small fluctuations about a nontrivial classical background. Taking the background to be stationary and focusing on the high-frequency limit, one may employ a geometric optics approximation, in which wave propagation reduces to motion along effective spacetime trajectories determined by the principal part of the equations of motion. Remarkably, theories with Lorentz-invariant and local interactions can admit background configurations that significantly modify the effective causal structure.

This phenomenon is most transparent in flat spacetime in the absence of dynamical gravity. In this case, trajectories exhibiting superluminal propagation relative to the Minkowski light cone admit a sharp interpretation, and one can formulate precise constraints by studying asymptotic time delays.\footnote{See, however, Ref.~\cite{Kaplan:2024qtf} for a different conclusion, in which enhanced quantum effects are argued to render unpredictive precisely those backgrounds that could give rise to causal paradoxes, thereby protecting the causal structure of the EFT.} In particular, requiring that signals do not experience measurable time advances relative to the Minkowski null cone leads to well-defined constraints on the space of EFTs consistent with causality, by bounding the ultraviolet cut-off and/or the allowed range of couplings.

When gravity is dynamical, however, the situation becomes more subtle. A natural first step is to restrict attention to asymptotically flat spacetimes and define asymptotic causality with respect to the causal structure of Minkowski spacetime. In practice, this amounts to requiring that signals sent from past null infinity do not reach future null infinity earlier than they would in flat space.
However, diffeomorphism invariance obscures the notion of a gauge-invariant local time delay, and even asymptotic observables require careful definition. In four spacetime dimensions, as emphasised by Penrose \cite{PENROSE19801} and further elaborated by \cite{Cameron:2020itp}, the long-range nature of gravity and the associated infrared effects make the definition of a sharp asymptotic time delay subtle. By contrast, in higher spacetime dimensions these infrared difficulties are absent, and asymptotic causality can be defined in an unambiguous manner.

To make this discussion more concrete, a sharp question in this context is the following: given an isolated black hole spacetime, can signals propagate faster than allowed by the asymptotic causal structure of the metric? In General Relativity, the answer is negative. The Gao–Wald theorem \cite{Gao:2000ga} implies that, assuming the null energy condition and a suitable global condition, any causal curve connecting past and future null infinity cannot arrive earlier than the corresponding null geodesic in the asymptotic region.

Once higher-derivative operators are included, however, the conclusion of the Gao–Wald theorem need no longer apply. In particular, effective field theory corrections generically modify the equations of motion so that the propagation of high-frequency fluctuations is governed by an effective metric that differs from the background geometry. From this perspective, causality is controlled not solely by the spacetime metric, but by the characteristic structure of the EFT~\cite{Babichev:2007dw}. It then becomes a meaningful and non-trivial question whether such corrections can generate measurable asymptotic time advances, allowing signals to reach future null infinity earlier than in the Schwarzschild background.

This question was the original motivation behind the analysis of Ref.~\cite{Camanho:2014apa}, which investigated whether the propagation of a photon or a graviton in a shock-wave background can exhibit an asymptotic time advance in the presence of higher-derivative interactions. Since then, a number of works \cite{Cheung:2014ega, Papallo:2015rna, Hollowood:2015elj, Benakli:2015qlh, Goon:2016une, deRham:2020zyh, AccettulliHuber:2020oou, Chen:2021bvg, deRham:2021bll, Bellazzini:2021shn, Serra:2022pzl, Bellazzini:2022wzv,  Chen:2023rar, Cremonini:2023epg, Cremonini:2024lxn, Grojean:2025zbn, Deb:2026enc} have employed related ideas and similar backgrounds to study time delays in gravitational EFTs.

In this work, we reconsider this question by analysing the asymptotic causal structure directly in terms of prompt null curves in the presence of higher-derivative operators, focusing in particular on the difference between $D=4$ and $D>4$ spacetime dimensions. To be concrete, in Section~\ref{FFR} we consider the EFT of a photon coupled to gravity through the operator $\alpha R_{\mu\nu\rho\sigma}F^{\mu\nu}F^{\rho\sigma}$. It has long been known \cite{Drummond:1979pp} that this interaction modifies the photon effective metric in such a way that propagation outside the background metric null cone can occur, depending on the polarisation and direction of the photon in a Schwarzschild background. As an irrelevant operator, its effects become important at short distances, namely below the length scale $\sqrt{\alpha}$.

For sufficiently small black holes, with Schwarzschild radius $r_s < \sqrt{\alpha}$, we find that in $D>4$ there exist light ray trajectories connecting points on past and future null infinity and passing closer than $\sqrt{\alpha}$ from the black hole that arrive earlier than the corresponding Schwarzschild-like null geodesic connecting the same endpoints. This constitutes a genuine violation of asymptotic causality. Such violations can only be avoided if the UV cut-off of the theory satisfies $\Lambda < 1/\sqrt{\alpha}$, parametrically below the perturbative strong coupling scale $(M_P/\alpha)^{1/3}$. This ensures the additional UV corrections become important before entering the regime where time advances would arise.
In contrast, in $D=4$ we show that the prompt curve is always the Schwarzschild-like null geodesic. Therefore, the asymptotic causal structure in $D=4$ coincides with that of General Relativity, independently of the presence of the $FFR$ operator.

In Section~\ref{theorems}, we demonstrate that this result is universal and does not depend on the specific form of the higher-derivative interaction. In four spacetime dimensions, for {\it any} asymptotically Schwarzschild spacetime, prompt null curves never probe the near-horizon region. Consequently, the asymptotic causal structure is insensitive to irrelevant operators localised at short distances. The physical origin of this behaviour can be traced to the logarithmically IR divergent contribution to the time of flight.

This divergence motivates the question of whether introducing a natural IR regulator can restore meaningful causality bounds in $D=4$.  A covariant way to implement an IR regulator is to consider asymptotically-AdS backgrounds, where the AdS radius $L$ provides a characteristic scale. In this case, Gao and Wald established a theorem~\cite{Gao:2000ga} showing that prompt causal curves between two points on the conformal boundary of asymptotically-AdS spacetimes satisfying the null energy condition must lie entirely on the boundary in GR. In Section~\ref{AdS}, we study four-dimensional AdS-Schwarzschild backgrounds and show that the AdS regulator does render individual calculations finite and could in principle allow for causality constraints. However, in the large $L$ limit, the IR effects that obstruct a non-trivial notion of asymptotic causality in flat space reappear: null geodesics remain far from the black hole, and gravitational time delays again dominate any finite time advances from beyond-GR corrections. Thus, causality bounds derived in AdS remain fundamentally dependent on the IR regulator and cannot be smoothly connected to flat-space physics. In Section \ref{finitedistance} we come back to the asymptotically flat case and consider the propagation of signals over a finite distance. If this distance is large enough that the logarithm in the time-delay expression is much greater than unity, coordinate ambiguities are avoided. We argue that superluminality in this setup is not related to closed timelike curves in any obvious way: building a time machine is obstructed by the requirement for parametrically large boosts.
\section{\texorpdfstring{$FFR$}{FFR} Theory}
\label{FFR}
We begin by studying the low energy theory of light and gravity coupled through an irrelevant $FFR$ operator, where $F$ denotes the the electromagnetic field strength and $R$ the Riemann tensor. This is the only operator contributing at leading order on Ricci-flat backgrounds, up to field redefinitions, assuming parity. The EFT action is therefore
\begin{equation}
    S = \int \dd^D x \sqrt{-g} \bigg[\frac{M_{P}^2}{2}R - \frac{1}{4}F^{\mu\nu}F_{\mu\nu} + \alpha R_{\mu\nu\rho\sigma}F^{\mu\nu}F^{\rho\sigma}\bigg]\,. \label{eq:L_FFR}
\end{equation}
It is readily observed that, in the absence of electromagnetic fields, the Schwarzschild metric 
\begin{equation}
    \dd s^2 = - \left(1 - \frac{r_s^{D-3}}{r^{D-3}}\right) \dd t^2 + \left(1 - \frac{r_s^{D-3}}{r^{D-3}}\right)^{-1} \dd r^2 + r^2 \dd \Omega_{D-2}^2 \label{eq:schwarz_D}
\end{equation}
remains a solution.

Importantly, the action \eqref{eq:L_FFR} gives second-order equations of motion (the well-posedness of the corresponding initial-value problem is studied in \cite{Davies:2021frz}). There are no extra ghost-like degrees of freedom with mass of order $\alpha^{-1/2}$, which would clearly indicate the breaking of the EFT at that scale.\footnote{The propagation of perturbations in a background geometry at linear order in the Riemann tensor is always described by second-order equations of motion. Locally, one can look at the Fourier modes of the perturbation with momentum $k^\mu$. A contribution to the equations of motion with more than two derivatives would give more than two momenta $k^\mu$. These vectors cannot be contracted with polarisation vectors because of transversality, and they cannot be contracted with themselves because $k^2=0$ and one can remove this using a field redefinition. Therefore all the indices should be contracted with the Riemann tensor, but this is not possible taking into account its symmetries. The argument applies to perturbations of any spin. This is also clear from the scattering amplitude perspective of \cite{Camanho:2014apa}.} 

The equation of motion for the photon
\begin{equation}
    \nabla_\nu F^{\mu\nu} - 4\alpha R^{\mu\nu}_{~~~\rho\sigma}\nabla_{\nu}F^{\rho\sigma} = 0 \label{eq:eom_FFR}
\end{equation}
shows that the $FFR$ term induces a polarisation-dependent force. Consequently, the EFT \eqref{eq:L_FFR} can exhibit propagation of signals faster than a massless field which is minimally coupled to the metric \cite{Drummond:1979pp,Goon:2016une}. 
In contrast to EFTs where superluminality can be eliminated by imposing sign constraints on the coupling coefficients \cite{Adams:2006sv}, the theory \eqref{eq:L_FFR} admits, for either sign of $\alpha$, a choice of polarisation that leads to superluminal photon propagation. The only way to avoid such behaviour is therefore to require that, for a given value of $\alpha$, the cut-off of the EFT is sufficiently low such that any potential superluminal effects become unresolvable.
Indeed, this is the resolution of the so-called Drummond–Hathrell `paradox', in which \eqref{eq:L_FFR} appears as the low-energy EFT of QED in curved spacetime after integrating out the electron. Moreover, any attempt to parametrically enhance the time-advance effect within this EFT inevitably leads to a loss of EFT control \cite{Goon:2016une}. This indicates that such enhancements are forbidden within the EFT's regime of validity, which is consistent with the expectation that the UV theory is free of pathologies.

In the remainder of this section, we determine how photons propagate in black-hole backgrounds and analyse their possible trajectories.
\subsection{Effective Metrics}
Due to their non-minimal coupling to gravity, photons in $FFR$ theory do not propagate along the null geodesics of the Schwarzschild metric. To investigate the resulting causal structure, we study the propagation of signals, specifically high-frequency (wave-packet) photon excitations.
In general, the causal structure of a theory is governed by the principal part of the linearised equations of motion around a background (the terms carrying the highest number of derivatives on the perturbation) since these determine the propagation of sharp wavefronts. For the $FFR$ operator, the principal part contains two derivatives acting on the perturbation, as the kinetic term. More generally, the action (\ref{eq:L_FFR}) should be understood as the leading part of an EFT, which will contain additional higher-dimensional operators that can, in principle, contribute additional derivatives to the principal part. However, provided the typical inverse wavelength of the wave packet is large compared to the inverse curvature radius of the background, while still remaining below the scale suppressing these operators, their contributions are subleading and can be consistently neglected.

The resulting leading-order equation, explicitly given in \eqref{characteristiceq}, is known in the theory of partial differential equations as the \emph{characteristic equation}. It governs the propagation of high-frequency disturbances and, in the present context, captures also the causal structure of the theory in the WKB or geometric-optics limit. As discussed in Appendix \ref{app:FFR_effectivemetric}, in $FFR$ theory the characteristic equation for the physical photon modes factorises into a product of effective metrics, one for each polarisation. The `causal cone' of the theory is defined by the outermost (i.e.~widest) null cone among the effective metrics.
\begin{figure}[h]
    \centering
    \includegraphics[width=0.4\linewidth]{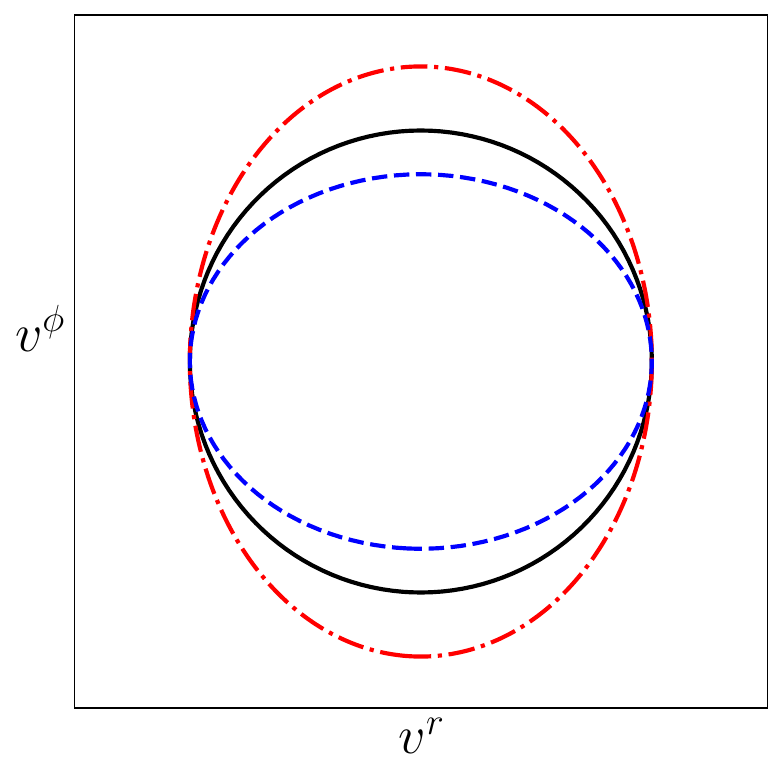}
    \caption{A cross-section of the nested set of light cones described by \eqref{eq:eff_metric} for black holes in $D = 4$, with axes scaled such that the background metric (shown in black) is a circle. The null cones for the larger (dashed dot red) and smaller (dashed blue) effective metrics corresponding to two different polarisation modes are shown as well. Figure adapted from \cite{Reall:2014pwa}.}
    \label{fig:nested_cones}
\end{figure}
Going through the analysis, the full details of which can be found in Appendix \ref{app:FFR_effectivemetric}, we therefore obtain the \emph{scalar} and \emph{vector} effective metrics of the photon governed by \eqref{eq:eom_FFR}:
\begin{equation} \label{eq:eff_metric}
    \mathrm{d}s^2 = -\left(1 - \frac{r_s^{D-3}}{r^{D-3}}\right) \mathrm{d}t^2 + \left(1 - \frac{r_s^{D-3}}{r^{D-3}}\right)^{-1} \mathrm{d}r^2 + \frac{r^2}{c_{S/V}(r)} \mathrm{d} \Omega^2_{D-2} 
\end{equation}
with
\begin{equation}
    c_S(r) = \frac{1 + \frac{c_1 \alpha r_s^{D-3}}{r^{D-1}}}{1 - \frac{c_2 \alpha r_s^{D-3}}{r^{D-1}}} \qquad \text{and} \qquad
    c_V(r)=\frac{1 - \frac{8 \alpha r_s^{D-3}}{r^{D-1}}}{1 + \frac{c_1 \alpha r_s^{D-3}}{r^{D-1}}} \>,\label{eq:eff_metric_C}
\end{equation}
where the coefficients $c_1$ and $c_2$ depend on the number of spacetime dimensions. Their values are
\begin{equation}
    c_1 = 4(D-3)\qquad
    c_2 = 4(D-3)(D-2)\>.
\end{equation}
To cast the metric in the form \eqref{eq:eff_metric}, a Weyl rescaling has been used to conveniently isolate the effect of the FFR operator entirely within the angular part of the metric\footnote{Notice that we did not perform any change of coordinates: we are still in the original Schwarzschild coordinates.}. In contrast, a minimally coupled field would propagate according to the standard Schwarzschild metric, which simply corresponds to setting $c_{S/V}=1$. It follows that the null cones of the metric characterised by $c_{S/V}>1$ enclose the null cones of the standard Schwarzschild metric: see Figure \ref{fig:nested_cones}.

The two metrics above govern the propagation of states that transform like scalars (longitudinal polarisation) and vectors (transverse polarisation) under $SO(D-1)$, respectively. For a given sign of $\alpha$, there is a clear hierarchy between the cones of the effective metrics. When $\alpha>0$, we have $c_S(r)>1>c_V(r)$, while for $\alpha<0$ the hierarchy is reversed $c_V(r)>1>c_S(r)$. Thus, depending on the sign of $\alpha$, one polarisation propagates faster than the other. For $\alpha>0$, the null cone of the \emph{scalar} effective metric is broader than the underlying Schwarzschild, while the transverse polarisations propagate strictly inside the null cone of the Schwarzschild metric (the situation is reversed when $\alpha<0$). Notice that photons of any polarisation will propagate at the speed of light radially. The widest light cone of the effective metric \eqref{eq:eff_metric} defines causality in this theory, and we will refer to it as the \emph{effective light cone}. That the effective light cone locally extends beyond the Schwarzschild null cone is not, by itself, a pathology. In a theory with dynamical gravity, there is no fixed background light cone against which to measure superluminality: a diffeomorphism can always set any chosen metric to Minkowski form at a given point, so the local comparison between two metrics is gauge-dependent. A physically meaningful question, that we will explore in the rest of the paper, is instead whether signals can arrive earlier than permitted by the asymptotic causal structure of the spacetime. 
It is worth mentioning that, at linear order in $\alpha$, two expressions in \eqref{eq:eff_metric_C} reduce to
\begin{equation}
    c_S(r) \approx 1+ (c_1+c_2)\frac{\alpha r_s^{D-3}}{r^{D-1}} \qquad \text{and} \qquad
    c_V(r) \approx 1- (c_1+8)\frac{\alpha r_s^{D-3}}{r^{D-1}} \>,
\end{equation}
As a consequence, the sign of $\alpha$ is effectively irrelevant at this order, and therefore may be chosen to be positive without loss of generality. For this reason, we will fix $\alpha>0$ for the rest of the discussion, which means that the broadest effective metric is the `scalar' one ($c_S>1$). All the results can be mapped to the case $\alpha<0$ simply by replacing $c_1$ with $-8$ and $c_2$ with $-c_1$. 
\subsection{Phenomenology}\label{phenomenology}
The EFT in \eqref{eq:L_FFR} introduces two additional scales in the system. The first is the length scale associated with the 
$FFR$ coupling, $\sqrt{\alpha}$. The second is the energy scale $(M_P/\alpha)^{1/3}$ $(\gg 1/\sqrt{\alpha})$ at which the new interaction becomes strongly coupled and which, therefore, sets an upper bound on the UV cut-off of the theory. As stated in the introduction, our primary goal is to determine whether causality constraints can be used to infer that the cut-off scale must lie parametrically below this perturbative estimate. 

Since we are interested in the causal structure experienced by photons propagating in black hole spacetimes, it is useful to distinguish between two regimes. A black hole can be considered \emph{large} if $r_s \gtrsim \sqrt{\alpha}$ or \emph{small} if $r_s \lesssim \sqrt{\alpha}$. 

To better understand the impact of the $FFR$ term, in Appendix \ref{app:general_metric} we study the effective potential experienced by a photon moving along null geodesics with angular momentum $\ell$. Using equations \eqref{eq:v_eff} and \eqref{eq:eff_metric_C}, we find
\begin{equation}
    V_{\text{eff}}(r) = \frac{\ell^2}{2r^2}\left(1-\frac{r_s^{D-3}}{r^{D-3}}\right)\frac{1 + \frac{c_1 \alpha r_s^{D-3}}{r^{D-1}}}{1 - \frac{c_2 \alpha r_s^{D-3}}{r^{D-1}}}. \label{eq:v_eff_FFR}
\end{equation}
\begin{figure}
    \centering
    \includegraphics[width=0.5\linewidth]{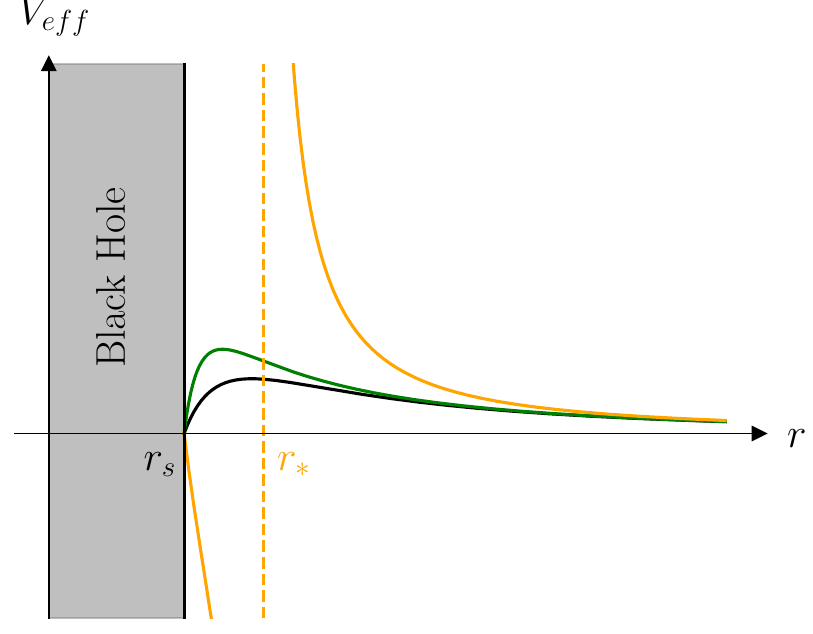}
    \caption{The effective potentials shown for $D=4$. Schwarzschild in black, large black holes in green, small black holes in orange. The potential barrier for the small black hole case is located at the vertical line $r \sim r_*$.}
    \label{fig:v_eff_FFR}
\end{figure}
The shape of the potential is given in Figure \ref{fig:v_eff_FFR}. From this, we see that for large black holes ($r_s > \sqrt{\alpha}$), the potential still retains a Schwarzschild-like shape, implying that photons can scatter off or enter into the black hole. The inclusion of the $FFR$ term shifts the peak of the potential to the left, reducing the radius of the photon sphere from the usual Schwarzschild value.

For small black holes, by contrast, the potential develops an infinite barrier at the length scale 
\begin{equation}
    r_* = (\alpha r_s^{D-3})^{\frac{1}{D-1}}, \label{eq:r_*}
\end{equation}
indicating that the $FFR$ effect is repulsive. To identify the regime in which this contribution dominates, we expand \eqref{eq:v_eff_FFR} for $r \gg r_*$
\begin{equation}
    \frac{V_{\text{eff}}(r)}{\ell^2} \approx \frac{1}{2r^2} - \frac{r_s^{D-3}}{2r^{D-1}} + \frac{1}{2}(c_1 + c_2) \frac{\alpha r_s^{D-3}}{r^{D+1}}.
\end{equation}
The first term corresponds to the centrifugal barrier, the second to the Schwarzschild contribution, and the third arises from the 
$FFR$ operator. For the $FFR$ term to dominate over the Schwarzschild term, we must have $r \lesssim \sqrt{\alpha}$, implying that the black hole must be small for this effect to be significant.

The scale $r_*$ possesses an additional crucial property: it marks a breakdown of causality in a local sense, distinct from the asymptotic notion studied in this paper. At $r = r_*$, the function $c(r)$ in the effective metric \eqref{eq:eff_metric} diverges, signalling a breakdown of hyperbolicity; below this radius, $c(r)$ turns negative and the effective metric changes signature, so the notion of causality defined by the effective light cone ceases to be meaningful altogether. This implies that additional higher-derivative operators must enter at the scale $r_*$, contributing corrections to the propagation speed comparable to those of the $FFR$ term — the EFT breaks down there, constraining its cut-off. Applied to the $FFR$ theory, this gives a cut-off bound $\Lambda \lesssim 1/r_* \ll (M_P/\alpha)^{1/3}$ from a purely local argument, independent of asymptotic causality, so the strong-coupling scale is already ruled out as the UV cut-off of the theory.

In general, however, the scale at which hyperbolicity breaks down and the scale associated with the leading higher-dimensional operator affecting the propagation speed are distinct. In the rest of the paper, we therefore focus on whether stronger constraints on the EFT cut-off can be obtained from asymptotic causality by studying photons that propagate sufficiently close to the black hole to be sensitive to the additional operator, while remaining far enough from the region where hyperbolicity breaks down.

\subsection{Null Geodesics} \label{ssec:null_geodesics}
Having understood the effects of the $FFR$ operator and the propagation of null rays in this theory---which together determine its local causal structure---we can now turn to its global causal structure. The latter is encoded in prompt causal curves, defined as those curves that connect two spatial points in the least amount of coordinate time. (This definition is valid for stationary spacetimes, which is the one we are limiting to in this paper: for a more general, geometric definition, see e.g.~\cite{Witten:2019qhl}.) For a Lorentzian manifold these curves are null geodesics. However this does not hold when there exist regions where the effective metric changes signature, as discussed above. 

This motivates us to consider the following setup as shown in Figure \ref{fig:FFR-geodesic}: consider two spatial points $p(R, \Tilde{\phi})$ and $q(R, \pi - \Tilde{\phi})$ lying on a straight line defined by $R\sin\Tilde{\phi} = d$ on the equatorial plane. We would like to find all possible null geodesics that connect $p$ and $q$. Before we proceed, we must explain certain quantities that will be used throughout this subsection. The \emph{bending angle} $\Delta\phi$ is defined as $\Delta\phi = \pi - 2\Tilde{\phi}$. Null geodesics will start from point $p$ at radius $R$, approach the black hole at a \emph{distance of closest approach} $r_0$, before reaching point $q$.  Finally, given the spherically symmetric nature of our problem, all of the geodesics will lie on a plane, which we take to be the equatorial plane without loss of generality. 

\begin{figure}[h]
    \centering
    \includegraphics[width=\linewidth]{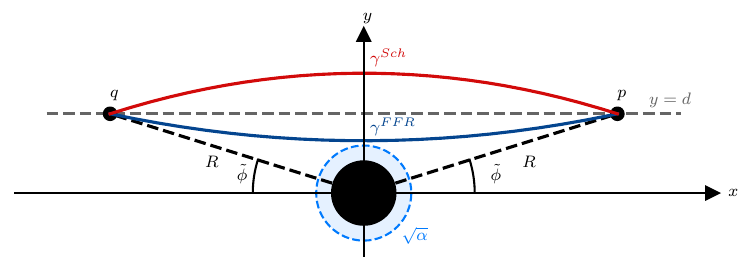}
    \caption{The two possible null geodesics connecting $p$ and $q$ as both points are sent to infinity along the horizontal line $y = d$ for small black holes.}
    \label{fig:FFR-geodesic}
\end{figure}

The explicit calculations are deferred to Appendix \ref{app:perturb_calcs_FFR} and performed at first order in $\frac{r_s^{D-3}}{r_0^{D-3}}$ and $\frac{\alpha r_s^{D-3}}{r_0^{D-1}}$. Applying \eqref{eq:app_deltaphiT_integrals} for $R \gg r_0 \gg r_s$, the bending angle of the null geodesic from $p$ to $q$ is given by \eqref{eq:bending_angle_full},
\begin{equation}
    \Delta \phi \approx \pi - 2\sin^{-1} \left(\frac{r_0}{R}\right) + \left[\left(\frac{r_s}{r_0}\right)^{D-3} - (c_1 + c_2)\frac{D-2}{D-1}\frac{\alpha r_s^{D-3}}{r_0^{D-1}}\right] \frac{\Gamma(\frac{D}{2})}{\Gamma(\frac{D+1}{2})} \frac{\sqrt{\pi} (D - 1)}{2}.
\end{equation}
Using the fact that $\Delta \phi = \pi - 2\Tilde{\phi}$,
\begin{equation} \label{eq:bending_angle}
    \Tilde{\phi} = \sin^{-1} \left(\frac{r_0}{R}\right) - \left[\left(\frac{r_s}{r_0}\right)^{D-3} - (c_1 + c_2)\frac{D-2}{D-1}\frac{\alpha r_s^{D-3}}{r_0^{D-1}}\right] \frac{\Gamma(\frac{D}{2})}{\Gamma(\frac{D+1}{2})} \frac{\sqrt{\pi}(D - 1)}{4}.
\end{equation}
In order to probe the asymptotic causal structure of the theory, we have to scan through all the possible null geodesics connecting $p$ and $q$ in the limit in which the two points are sent to infinity. To this end, we consider the limit in which the two points are moved away along a straight line, that is taking $R \rightarrow \infty$ and $\Tilde{\phi}\rightarrow 0$ while keeping $R\sin\Tilde{\phi} = d$ constant.

For a small black hole we can identify two curves\footnote{More precisely, for two points $p$ and $q$ lying on the same half of the equatorial plane, there exist infinitely many null curves connecting them. These trajectories differ by the number of times they wind around the black hole and by the orientation of their motion. Since our goal is to identify the prompt null curves between the two points, we restrict attention to non-spiralling null curves that remain within the half-plane containing $p$ and $q$.}. The Schwarzschild-like geodesic, which exists also for $\alpha =0$, has a distance of closest approach that grows indefinitely with $R$, i.e. 
\begin{equation}
    r_0^{Sch} \approx \left[\frac{\sqrt{\pi} (D - 1)}{4} \frac{\Gamma(\frac{D}{2})}{\Gamma(\frac{D+1}{2})} R r_s^{D - 3}\right]^{\frac{1}{D-2}}\,. \label{eq:r0_Sch_Line}
\end{equation}
The second geodesic, exclusive to the $FFR$ theory, maintains its proximity to the $\sqrt{\alpha}$ region,
\begin{equation}
    r_0^{FFR} \approx \sqrt{(c_1 + c_2) \alpha \frac{D-2}{D-1}}\,. \label{eq:r0_FFR_Line}
\end{equation}

\subsection{Time Delay/Advance} \label{ssec:tof}
Once all null geodesics connecting the two points have been identified, we compare their respective times of flight in the limit where $p$ and $q$ approach infinity along a straight line, keeping $d=R\sin{\tilde{\phi}}$ fixed.
\subsubsection{\texorpdfstring{$D>4$}{D>4}}
The geodesic time of flight is calculated perturbatively, meaning for $R \gg r_0 \gg r_s$, in \eqref{eq:D_t_geo} up to $\mathcal{O}(r_s)$. Using \eqref{eq:r0_Sch_Line} and \eqref{eq:r0_FFR_Line}, and working in the limit where $p$ and $q$ are taken to infinity, the difference in time of flight between the two geodesics is
\begin{equation}
     \lim_{R\rightarrow\infty}(\Delta t_{geo}^{Sch} - \Delta t_{geo}^{FFR}) = - \frac{r_s^{D-3}}{(r_0^{FFR})^{D-4}} \sqrt{\pi} \frac{D-1}{(D-2)(D-4)} \frac{\Gamma\left(\frac{D}{2}\right)}{\Gamma\left(\frac{D+1}{2}\right)}\,,
\end{equation}
which shows that the Schwarzschild-like geodesic is faster compared to the $FFR$ geodesic.
The reader can notice that the above expression does not depend explicitly on $d$ at leading order in the large $R$ expansion: indeed we might have equivalently chosen $p$ and $q$ to be strictly antipodal, meaning $\tilde\phi\equiv 0$ identically in $R$, without changing the outcome at this order.

At this point, the reader might wonder whether the fact that the Schwarzschild-like geodesic is faster than the $FFR$ geodesic implies the absence of any causality issues, since the first curve probes a region far from the black hole and is therefore insensitive to higher-dimension operators like $FFR$. This conclusion, however, would be premature: while the fastest geodesic is indeed Schwarzschild-like, the prompt causal
curve need not be a geodesic, and faster non-geodesic null trajectories do exist. To see this, we consider a simple non-geodesic curve: a straight horizontal line on the equatorial plane satisfying $R\sin\phi = d$ and $\dd s^2 = 0$. If we consider the difference in time of flight between the Schwarzschild-like geodesic and this null straight line, whose $\Delta t_{str}$ is calculated perturbatively in \eqref{eq:D_t_str}, we get
\begin{equation}
    \lim_{R\rightarrow\infty}(\Delta t_{geo}^{Sch} - \Delta t_{str}) = - d \left(\frac{r_s}{d}\right)^{D-3} \sqrt{\pi} \frac{D-1}{D-4} \left[1 - \frac{\alpha}{d^2} (c_1 + c_2)\frac{D-4}{D-1}\right] \frac{\Gamma\left(\frac{D}{2}\right)}{2\Gamma\left(\frac{D+1}{2}\right)}.
\end{equation}
We see that if this straight line passes through the $FFR$ dominated region,
\begin{equation}\label{straightdist}
    d \lesssim \sqrt{(c_1 + c_2) \alpha \frac{D-4}{D-1}} 
\end{equation}
the null straight path is faster than the two geodesics connecting $p$ and $q$. Notice that the geodesic that stays close to the black hole remains at a distance (see \eqref{eq:r0_FFR_Line}) which is larger than the limit just discussed in \eqref{straightdist}. The geodesic does not get close enough to `profit' off the enhanced speed and win against the far-away geodesic. 

At this point the reader might be confused on why there can be curves which are faster than geodesics. For stationary spacetimes, see for instance \cite{Landau:1975pou}, it can be shown that null curves connecting the two spatial points are geodesics if the path extremises the time-of-flight functional
\begin{equation}
    \delta \int_{p}^{q} \dd t = 0
\end{equation}
which is the same as Fermat's principle. Then, assuming that this functional has a minimum, we claim that a null geodesic is the prompt causal curve connecting the two points. This argument fails in our case because, if we prevent curves from entering the hyperbolicity breaking region $r_*$ where the functional is ill-defined, the space of curves develops a boundary. This boundary consists of curves that come into contact with $r_*$, and the curve minimizing the time-of-flight can belong to this boundary without being a stationary point. As a result, the geodesics found from extremising the time-of-flight functional are merely stationary paths and do not correspond to prompt causal curves.
\subsubsection{\texorpdfstring{$D=4$}{D=4}}\label{sec:D4}
We now repeat the same analysis for the case $D=4$. The geodesic time of flight is calculated perturbatively in \eqref{eq:4d_t_geo} up to $\mathcal{O}(r_s)$. Using \eqref{eq:r0_Sch_Line} and \eqref{eq:r0_FFR_Line}, and working in the limit where $p$ and $q$ are taken to infinity, the difference in time of flight between the two geodesics is
\begin{equation}
     \lim_{R\rightarrow\infty}(\Delta t_{geo}^{Sch} - \Delta t_{geo}^{FFR}) =  \lim_{R\rightarrow\infty}\left[r_s \ln \left(\frac{8\alpha}{R r_s}\right) + \frac{3}{2} r_s - \frac{24 \alpha}{R}\right] = -\infty
\end{equation}
which shows that the Schwarzschild-like geodesic is faster compared to the $FFR$ geodesic. However this conclusion held true even in $D>4$, and the straight line was necessary to find an even faster curve. The difference in time of flight between the Schwarzschild-like geodesic and the null straight line, the latter of which is calculated for $D = 4$ in \eqref{eq:4d_t_str}, is 
\begin{equation}\label{eq:Schvsstraight}
     \lim_{R\rightarrow\infty}(\Delta t_{geo}^{Sch} - \Delta t_{str}) =  \lim_{R\rightarrow\infty}\left[2r_s \ln \left(\frac{d}{\sqrt{R r_s}}\right) + 2 r_s + \frac{8\alpha r_s}{d^2} \left(1 - \frac{d^2}{R r_s}\right)\right] = -\infty
\end{equation}
which shows that the Schwarzschild-like geodesic is still faster than these straight lines. The conclusion is that, among the curves we checked, the Schwarzschild-like geodesic is the fastest. 

In fact, our set of curves was somewhat restricted: we considered only geodesics and straight-line trajectories with impact parameters that admit a perturbative treatment. In the next section, we will give a general argument showing that for $D = 4$, the Schwarzschild-like geodesic is the fastest curve connecting the two points i.e.~the prompt causal curve.

As a final remark, one may choose to take the $R\to \infty$ limit in a different way, namely, sending the two points to infinity along a radial line ($R \rightarrow \infty$) while keeping the initial angle constant ($\Tilde{\phi} = \text{const}$). In this limit, it can be shown that for small black holes, the only geodesics connecting them are the Schwarzschild-like geodesic with $r_0^{Sch} \approx R \sin \Tilde{\phi}$ and another $FFR$ geodesic passing close to the hyperbolicity-breaking region $r_0^{FFR} \sim r_*$. It is tempting to expect that the latter geodesic might be the fastest null curve, because the distance of closest approach lies well inside the $\sqrt{\alpha}$ region where time advances are expected to occur. 
However the former (approximately straight) geodesic that always remains far from the black hole is always faster.
Its time of flight—dominated by the Euclidean distance between the initial and final points—is much less than that of the $FFR$ geodesic due to the Euclidean triangle inequality. Thus, we have
\begin{equation}
    \Delta t_{geo}^{Sch} < \Delta t_{geo}^{FFR}
\end{equation}
implying that the Schwarzschild-like geodesic is faster than the $FFR$ geodesic even in this case. As for the Schwarzschild-like geodesic and the null straight curve, given that the null straight curve is well outside the region where beyond GR effects are important $r\sim R\gg \sqrt{\alpha}$, the usual theorems guaranteeing that a geodesic is the fastest curve apply and the Schwarzschild-like geodesic is the faster curve.
\smallskip

We now compare our results with existing approaches to causality bounds. These approaches typically compute the time delay (defined as the difference between the time of flight in the full theory and the corresponding time in Minkowski spacetime, identified via a specific choice of coordinates) and declare a pathology if this delay is negative and resolvable within the EFT regime. In practice, these calculations evaluate the time of flight along a straight-line trajectory at a fixed impact parameter $b$ (the Born approximation), which is valid in the weak-field regime $b \gg r_s$ where the bending angle is $\mathcal{O}(r_s/b)$. Within this framework, the eikonal phase is computed at a fixed impact parameter, and the time delay (or advance) is then extracted from the derivative of the eikonal phase with respect to energy.

In $D>4$, this procedure is physically well-motivated. The integral defining the Shapiro delay converges rapidly as the endpoints recede from the black hole, since corrections to Minkowski fall off as $1/r^{D-3}$. One can therefore extend a geodesic arc, computed near the black hole at finite impact parameter and scattering angle $\phi$, out to large $R$. This extension contributes a negligible additional time delay, and the resulting causal curve is faster than the asymptotic Minkowski null ray, constituting a genuine violation of asymptotic causality.

In $D=4$ the situation is more subtle. The straight-line time of flight at fixed $b$ receives a gravitational delay $\sim r_s \ln(R/b)$ that grows logarithmically with the distance $R$ to the endpoints, as contributions accumulate at all distances from the source. As $R \to \infty$ the straight-line curve, like any curve that probes the near-horizon region, is never asymptotically prompt. As we have shown, for asymptotically distant endpoints, the Schwarzschild-like geodesic, which stays far from the black hole and avoids accumulating a large delay, is always faster. The Born approximation masks this because it computes the time of flight along a prescribed path without verifying whether that path is actually the fastest one available. 

At finite but large $R$, the logarithm is finite, the Born approximation remains valid, and a net time advance relative to flat space can be obtained, as discussed in the literature. However, this is not a violation of {\it asymptotic} causality and, as we will argue in Section \ref{finitedistance}, the usual pathological implications are less clear in this regime.

\section{Causality Theorems}
\label{theorems}
Previously, in $FFR$ theory, we have seen that for $D>4$ we can find null curves which are faster than all possible null geodesics connecting the two faraway points. On the contrary, the Schwarzschild-like geodesic is truly the fastest null curve in $D = 4$. Prudent readers might be sceptical of this conclusion. Given that we worked perturbatively, one might argue that there might be non-perturbative curves, i.e.~curves that are very close to $r_*$ in $D = 4$, that are not captured by our calculations.

We are going to show that what we found above is actually very general. In $D=4$ for a generic EFT, the Schwarzschild-like geodesic is the prompt causal curve: the asymptotic causal structure is not altered by beyond-GR effects.

In this section, we review a theorem proposed by Gao and Wald \cite{Gao:2000ga} that establishes the asymptotic causal properties for General Relativity. Then, we propose an extension of Gao and Wald to gravitational EFTs in four-dimensional stationary spacetimes.

\subsection{General Relativity: Gao-Wald Theorem}
\begin{figure}[h]
    \centering
    \includegraphics[width=\linewidth]{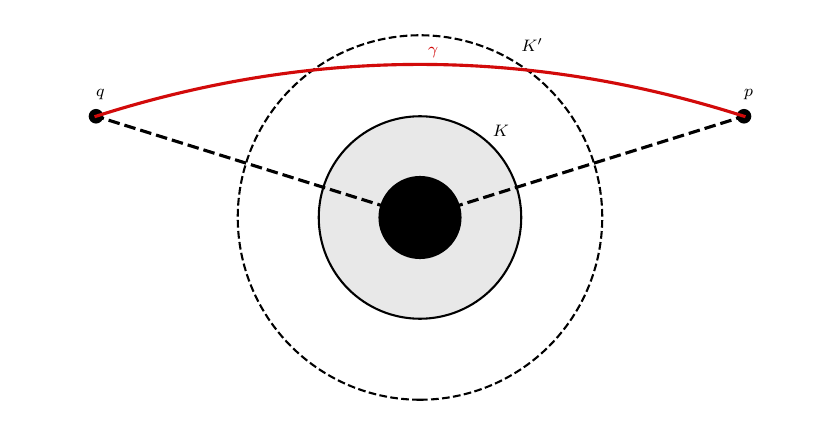}
    \caption{Illustration of the Gao-Wald theorem for Schwarzschild spacetime on the equatorial plane.}
    \label{fig:gao-wald-theorem}
\end{figure}

In General Relativity, Gao and Wald established a theorem in \cite{Gao:2000ga} which states that for a spacetime obeying the null energy condition and a suitable global condition, no time advance is possible. For stationary spacetimes, the statement goes: given a compact region $K$, one can always choose another compact region $K'$ containing $K$ such that, if two points $p$ and $q$ are not in $K'$, then the prompt causal curve connecting $p$ and $q$ does not enter $K$\footnote{In the original formulation of Gao and Wald, the regions $K$ and $K'$ are compact in spacetime. For static spacetimes, it can be slightly modified such that $K$ and $K'$ are compact in space but indefinitely extended in time.}. As a consequence, as one takes the two points further apart, the fastest null curve will go around any localised region rather than going through it. The implication of this theorem is that localised change of the metric cannot induce any time advance asymptotically.

A simple illustration of this theorem can be seen in the Schwarzschild spacetime \eqref{eq:schwarz_D} shown in Figure \ref{fig:gao-wald-theorem}. As we take the two points $p$ and $q$ to infinity, the distance of closest approach of the prompt curve connecting them grows indefinitely with $R$ (see \eqref{eq:r0_Sch_Line}).
\subsection{Extension to Gravitational EFTs} \label{sec:theorem}
With our conclusions on the asymptotic structure of the $FFR$ theory for $D = 4$ in mind, we would like to propose an extension of the Gao-Wald theorem to generic gravitational EFTs in $D=4$ spacetimes. To do so, one must quantify in what sense the light cone approaches that of Schwarzschild asymptotically. We introduce the following definition:
\begin{definition}[Asymptotically Schwarzschild Spacetime]
    A stationary spacetime $(\mathcal{M},\ell^{eff})$, where $\ell^{eff}$ is the broadest effective lightcone, is said to be \emph{asymptotically-Schwarzschild} if there exists a Schwarzschild reference metric $g^{Sch}_{\mu\nu}$ and a compact spatial ball $\Sigma$ of coordinate radius $\sigma$ such that for all curves $\Gamma$ causal with respect to $\ell^{eff}$ outside $\Sigma$:
    \begin{equation}\label{eq:thm_bounds}
        g^{Sch}_{\mu\nu} \deriv{\Gamma^\mu}{t}\deriv{\Gamma^\nu}{t} \le \frac{c}{r^{1+\epsilon}}\,,\quad r>\sigma
    \end{equation}
    for constants $\epsilon>0$ and $c>0$, where $r$ denotes the areal radius coordinate.
    \label{def:asymp_Schwarz}
\end{definition}
We introduce the ball $\Sigma$ to separate the region near the black hole, where large deviations from the Schwarzschild metric $g^{Sch}_{\mu\nu}$ are allowed, and the outer region where only mild superluminality is permitted. We require \eqref{eq:thm_bounds} to quantify how fast the effective light cone should approach the Schwarzschild light cone.

Any computation that does not involve points inside $\Sigma$ can be performed using this fictitious Schwarzschild metric up to a uniformly bounded error, as shown in Appendix \ref{app:bounds}. To make the notation less cumbersome, we will report approximate Schwarzschild results leaving the error bound implicit.
\begin{theorem}
Given a 4-dimensional stationary asymptotically-Schwarzschild spacetime with an effective light cone $\ell^{eff}$, and provided we forbid curves from entering hyperbolicity breaking regions (if they exist), then for every space region $K$ there exists another region $K' \supset K$ such that for all $p,q \notin K'$ the prompt causal curve $\gamma$ connecting $p$ to $q$ does not intersect the region $K$.
    \label{th:theorem}
\end{theorem}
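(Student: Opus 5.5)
The plan is a time-of-flight comparison. Any causal curve through $K$ picks up a logarithmically divergent Shapiro-type delay, while a competitor curve that stays far from the black hole avoids it. Given $K$, I will enlarge it to a coordinate ball $B_\rho$ with $\rho \ge \sigma$, so that $K\cup\Sigma\subset B_\rho$. Now take $p,q$ outside a much larger ball $B_{R_0}$, with $R_0$ to be fixed, and let $R=\min(r_p,r_q)$.

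\textbf{Upper bound on the fastest time.} I will use the Schwarzschild-like null geodesic of $g^{Sch}$ joining $p$ and $q$. By \eqref{eq:r0_Sch_Line} with $D=4$, its closest approach is $r_0\sim\sqrt{R r_s}$, which grows without bound, so for $R_0$ large this geodesic never enters $\Sigma$. By \eqref{eq:thm_bounds}, a Schwarzschild-null curve outside $\Sigma$ is causal for $\ell^{eff}$, since $\ell^{eff}$ is at least as broad there. The prompt time therefore satisfies $T_{\min}\le T^{Sch}_{geo}(p,q)$. From the perturbative results of Appendix \ref{app:perturb_calcs_FFR} (cf.\ \eqref{eq:Schvsstraight}), this equals the Euclidean chord plus a delay of order $r_s\ln(R/\sqrt{Rr_s})=\tfrac12 r_s\ln(R/r_s)$, up to $O(r_s)$ terms.

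\textbf{Lower bound for curves through $K$.} Let $\gamma$ be any $\ell^{eff}$-causal curve meeting $K$ while avoiding hyperbolicity-breaking regions. Split it into three parts: an incoming piece from $p$ to the first entry into $B_\rho$, an interior piece, and an outgoing piece from the last exit to $q$. The interior piece takes nonnegative time. For the exterior pieces, rewrite \eqref{eq:thm_bounds} in Schwarzschild coordinates as
\[
\Big(1-\tfrac{r_s}{r}\Big)^{-1}\dot r^2+r^2\dot\Omega^2\le\Big(1-\tfrac{r_s}{r}\Big)+\frac{c}{r^{1+\epsilon}} ,
\]
which bounds the optical speed by $1-\tfrac{r_s}{r}+O(r^{-1-\epsilon})$. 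Hence the time of a piece running from radius $R$ to radius $\rho$ is at least
\[
\int_\rho^R \frac{dr}{1-r_s/r+c\,r^{-1-\epsilon}}\;\ge\; R-\rho+r_s\ln\frac{R}{\rho}-C ,
\]
where $C$ depends only on $c,\epsilon,\rho$ and is finite because $\epsilon>0$. This is the key use of $\epsilon>0$: the correction to the $1/r$ tail is integrable and cannot cancel the logarithm. For the angular part I will use the radial-plus-angular decomposition, equivalently the optical (Fermat) metric, to show that each exterior piece costs at least the optical distance, that is, the Euclidean distance to the point on $\partial B_\rho$ plus $r_s\ln(R/\rho)$, up to bounded errors. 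The reference for this bound is Appendix \ref{app:bounds}. Adding the two pieces gives
\[
T(\gamma)\ge |p|+|q|-2\rho+2r_s\ln(R/\rho)-2C ,
\]
which exceeds the Euclidean chord.

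\textbf{Comparison and main obstacle.} For roughly antipodal $p,q$ the chord is $|p|+|q|$ minus a small amount, so the difference is $T(\gamma)-T^{Sch}_{geo}\gtrsim \tfrac32 r_s\ln R-O(1)\to\infty$. Choosing $R_0$ large then fixes $K'=B_{R_0}$. The main obstacle is making this uniform over all configurations of $p,q$, not only antipodal ones. When the angle between $p$ and $q$ is small, the chord is much shorter than $|p|+|q|$, and a detour through $K$ costs time linear in $R$ by the triangle inequality, so that regime is easy. The intermediate angles must be handled uniformly. I would do this by comparing optical lengths directly, writing both curves in the optical metric $\tilde h=(1-r_s/r)^{-2}dr^2+r^2(1-r_s/r)^{-1}d\Omega^2$, and showing that the $\tilde h$-distance restricted to paths through $B_\rho$ exceeds the unrestricted $\tilde h$-distance by a quantity that grows in $R$ uniformly in the angle. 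I would also check that the $O(r^{-1-\epsilon})$ errors integrate to a constant independent of the angle, as Appendix \ref{app:bounds} is stated to provide.
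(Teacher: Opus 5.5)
Your argument follows the paper's proof. The Schwarzschild-like geodesic, whose closest approach $\sim\sqrt{Rr_s}$ eventually clears any fixed ball, serves as the competitor. Every curve meeting the compact region splits into two exterior legs, each costing at least the radial time $\approx R-\rho+r_s\ln(R/\rho)$, plus an interior piece of nonnegative duration. Two full logarithms of $R$ therefore beat the geodesic's single one.

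You differ from the paper in two bookkeeping choices, both harmless or favourable. First, enlarging $K$ to $B_\rho\supseteq K\cup\Sigma$ removes the paper's separate case of curves that enter $K$ but not $\Sigma$. The price is a $K'$ exponentially large in $\rho$ instead of $\sigma$, which is irrelevant for existence. Second, integrating the radial speed bound in $\dd r$ controls the $O(r^{-1-\epsilon})$ corrections for every causal leg, since each leg must cross every radius in $[\rho,|p|]$. This avoids the regularity assumption \eqref{eqappC:regularity_condition} that Appendix~\ref{app:bounds} needs.

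Two corrections, however.

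First, the geodesic's delay for both legs together is $r_s\ln(4R/r_s)+O(r_s)$, cf.\ \eqref{eq:t_gamma}, not $\tfrac12 r_s\ln(R/r_s)$. You counted only one leg. The margin is therefore $r_s\ln R$ rather than $\tfrac32 r_s\ln R$, and the conclusion is unaffected.

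Second, your ``main obstacle'' is not one. Your radial lower bound $|p|+|q|-2\rho+r_s\ln(|p||q|/\rho^2)-2C$ already holds at every angle. The competitor's time is at most the chord $|p-q|\le|p|+|q|$ plus a delay. That delay is largest in the near-antipodal configuration, where the closest approach is smallest, so it stays a full $r_s\ln R$ below the logarithm in your lower bound. The comparison is thus automatically uniform in the angle. You need neither the optical-metric comparison nor the ``Euclidean distance plus logarithm'' estimate, which Appendix~\ref{app:bounds} does not in fact supply.
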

\begin{figure}[h]
    \centering
    \includegraphics[width=\linewidth]{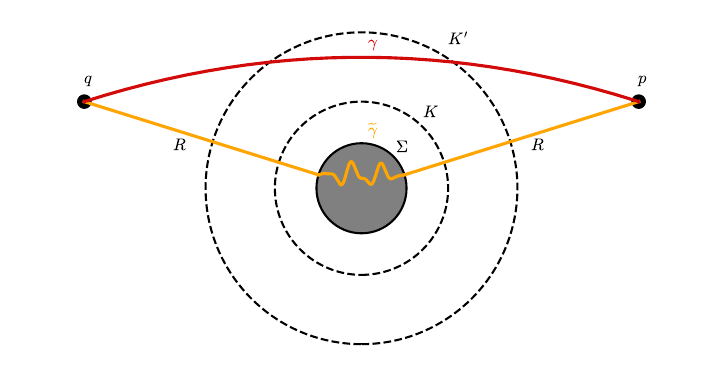}
    \caption{The setup for Theorem \ref{th:theorem}.} 
    \label{fig:theorem_fig}
\end{figure}
Qualitatively, this theorem suggests that the asymptotic structure of a black hole in any modified gravity theory in 4-dimensional spacetimes will have the same asymptotic causal structure as that of General Relativity.
\begin{proof}
We focus on regions $K$ that are balls containing and concentric with $\Sigma$ as shown in Figure \ref{fig:theorem_fig}. We will explicitly exhibit the region $K'[K]$ for these balls, while for any other $K$ that is not a ball centred around the origin we can reduce to the previous case by assigning the region $K'[K]$ to be $K'[\tilde K]$, where $\tilde K$ is a ball containing $K$ and concentric with $\Sigma$.

Recalling the distance of closest approach for the Schwarzschild geodesic \eqref{eq:r0_Sch_Line} in $D=4$, we temporarily take $K'$ to be a concentric ball with radius given by
\begin{equation} \label{eq:thm_rkp1}
    r_s r_{K'} \gtrsim r^2_{K}.
\end{equation}
We will later have to impose an additional condition on $r_{K'}$.
Condition \eqref{eq:thm_rkp1} ensures that for any points $p,q \notin K'$, for simplicity both at the same distance $R$ from the origin, there exists a Schwarzschild null geodesic $\gamma$ connecting them never intersecting $K$. We begin by looking at the technically more challenging case where $p$ and $q$ are antipodal, leaving the other case for later. The distance of closest approach is \eqref{eq:r0_Sch_Line}, and the time of flight of this curve is then given by
\begin{equation}
    \Delta t_\gamma \simeq 2R + r_s \left(\ln\frac{4R}{r_s}+1\right) \,. \label{eq:t_gamma}
\end{equation}

Next we consider a generic null curve $\Tilde{\gamma}$ connecting $p$ and $q$ that enters $K$. We want to show that $\Tilde{\gamma}$ is slower than $\gamma$. If $\Tilde{\gamma}$ does not enter $\Sigma$, then it remains in the Schwarzschild-like region and is slower than $\gamma$ by standard results in general relativity. If $\Tilde{\gamma}$ enters $\Sigma$ at least once, then it will encounter deviations from Schwarzschild and could have a chance at becoming the prompt curve. We denote $A$ and $B$ as the entry and exit points on the surface of $\Sigma$ respectively, which allows us to divide $\Tilde{\gamma}$ into three parts: $\Tilde{\gamma}_{p A}$ and $\Tilde{\gamma}_{B q}$ which lie outside $\Sigma$ and $\Tilde{\gamma}_{AB}$ which lies inside $\Sigma$. We now explore all possible paths and all possible $A,\,B$ and minimise each time interval independently from the others (meaning that the choices of $A,\,B$ need not coincide for the three curves). Then
\begin{equation}
    \min \Delta t_{\Tilde{\gamma}} \ge \min \Delta t_{\Tilde{\gamma}_{pA}} + \min \Delta t_{\Tilde{\gamma}_{AB}} + \min \Delta t_{\Tilde{\gamma}_{Bq}}.
\end{equation}
For $\Tilde{\gamma}_{pA}$ (respectively, $\Tilde{\gamma}_{Bq}$), the path with the minimum time possible between $p\,(q)$ and the sphere $\Sigma$ is a radial geodesic,
\begin{align}
    \min \Delta t_{\Tilde{\gamma}_{pA}} = \min \Delta t_{\Tilde{\gamma}_{Bq}} = \int_{\sigma}^{R} \dd r \left(1 - \frac{r_s}{r}\right)^{-1} \approx (R - \sigma) + r_s \ln \left(\frac{R}{\sigma}\right).
\end{align}
As for $\Tilde{\gamma}_{AB}$, provided that we limit ourselves to Lorentzian metrics by excising any hyperbolicity breaking regions from our spacetime, the curve $\tilde\gamma_{AB}$ can only go forward in time which implies $\min \Delta t_{\Tilde{\gamma}_{AB}} \ge 0$. Finally, we can compare the time of flight between $\gamma$ and $\Tilde{\gamma}$:
\begin{equation}\label{eq:thm_rkp2}
    \Delta t_\gamma - \min \Delta t_{\Tilde{\gamma}} \le 2 \sigma + r_s + r_s \ln \left(\frac{4 \sigma^2}{R r_s}\right)\,.
\end{equation}
Considering that $\sigma$ and $r_s$ are fixed, while $R$ can be taken arbitrarily large, we can find a $r_{K'}=\min R$ that guarantees that the above difference is negative (even after accounting for the bounded error); this guarantees that the Schwarzschild geodesic is faster. This happens when
\begin{equation}
    \label{eq:thm_rkp2b}
    r_{K'} > \frac{4\sigma^2}{r_s} e^{\frac{2\sigma}{r_s}+1}
\end{equation}
Notice that \eqref{eq:thm_rkp2b} is a condition on $r_{K'}$ that does not scale with $r_K$, contrary to \eqref{eq:thm_rkp1}, but the length scale involved can be exponentially large. Putting the two conditions together, we derive the final condition on $r_{K'}$
\begin{equation}
    \label{eq:thm_rkpfinal}
    r_{K'} > \max\bigg\{ \frac{r_K^2}{r_s},\; \frac{4\sigma^2}{r_s} e^{\frac{2\sigma}{r_s}+1} \bigg\}\,.
\end{equation}
For initial and final points at a distance at least \eqref{eq:thm_rkpfinal} from the origin, the prompt geodesic connecting them is the Schwarzschild-like one.

For the more general setup where $p,\,q$ are not antipodal, a similar argument can be applied\footnote{This is also valid in arbitrary dimensions.}. The analogue of \eqref{eq:t_gamma} is 
\begin{equation} \label{eq:thm_nonantipodal}
    \Delta t_\gamma \simeq 2R \cos\tilde\phi + r_s \left( 2\ln\frac{2}{\sin\tilde\phi} +1 \right).
\end{equation}
The time of flight of $\tilde\gamma$ goes as $\sim 2R$, and the time advances inside $\Sigma$ can be bounded in the same way as before. Then the triangle inequality, and not the logarithm particular of $D=4$, ensures that the Schwarzschild-like geodesic will always be faster at $\mathcal{O}(R^1)$.  Equivalently, the time of flight of the Schwarzschild geodesic \eqref{eq:thm_nonantipodal} is shorter because of the extra cosine. This concludes the proof.
\end{proof}

We conclude commenting on some examples where \eqref{eq:thm_bounds} holds. Clearly the $FFR$ effective metric meets this criterion, because beyond GR corrections vanish faster in $1/r$. In addition, to stress that the proof of the theorem relied on stationarity and never on rotational symmetry, we point out that also the Kerr light cone does approach the Schwarzschild light cone in the sense of \eqref{eq:thm_bounds}.
\section{IR Regularisation for \texorpdfstring{$D = 4$}{D = 4}}
In the previous section, we have shown that for 4-dimensional, stationary, asymptotically-Schwarzschild spacetimes, the asymptotic causal structure is identical to Schwarzschild regardless of any beyond-GR corrections. As a consequence, asymptotic causality does not provide a meaningful constraint in this setting. The origin of this can be traced to the logarithmic IR divergence in $D = 4$: the gravitational time delay grows without bound when taking the two points to infinity, overwhelming any finite time advances induced by higher-order operators. 

This motivates the question of what happens when we introduce an IR regulator in the theory. We will explore two ways to proceed: placing the theory of interest in an asymptotically-AdS background in which the AdS radius provides a natural and covariant cut-off, or introducing a hard cut-off by working at a finite distance from the black hole. 
\subsection{Covariant Cut-Off: Asymptotically-AdS Background}\label{AdS}
Gao and Wald established another theorem in \cite{Gao:2000ga},  stating that for generic asymptotically-AdS spacetimes that satisfy the null energy condition, any prompt causal curve between two points on the conformal boundary must lie entirely on the boundary. Thus, the AdS radius provides a natural cut-off for the theory while also preserving a well-defined notion of asymptotic causality.

We are now going to show that, while asymptotic causality bounds are unavailable for 4-dimensional asymptotically-flat spacetimes, this is not the case for 4-dimensional asymptotically-AdS spacetimes: the time delay is finite and allows meaningful comparisons with the AdS background.  However, we also show that this procedure does not admit a flat-space limit: the logarithmic divergence rears its head when the AdS radius is taken to be very large.

We start with the $D = 4$ AdS-Schwarzschild metric with AdS radius $L$ given by
\begin{equation}
    \dd s^2 = -\left(1 - \frac{r_s}{r} + \frac{r^2}{L^2}\right) \dd t^2 + \left(1 - \frac{r_s}{r} + \frac{r^2}{L^2}\right)^{-1} \dd r^2 + r^2 \dd \Omega_2^2. \label{eq:AdS_Schwarz_Metric}
\end{equation}
We consider the same setup as in Section \ref{ssec:null_geodesics} and \ref{ssec:tof}. To determine the null geodesics connecting the two points, we evaluate the bending angle using \eqref{eq:bending_angle_SAdS_Integral}. Due to a cancellation of the $L$-dependent terms in the integral, the expression for the bending angle is exactly identical\footnote{This is due to the fact that we parametrised the integral in terms of the distance of closest approach $r_0$ rather than the impact parameter $b$. In the latter case, the results will be the same as the asymptotically-flat case only to leading order in $L$.} to that of the asymptotically-flat case. As a result, when sending the two points along a straight line, the distance of closest approach of the null geodesic will grow indefinitely with $R$ as \eqref{eq:r0_Sch_Line}, which for $D=4$ reduces to
\begin{equation}
    r_0 = \sqrt{R r_s}. \label{eq:r0_Sch_D=4}
\end{equation}
For geodesics, we can evaluate the time of flight using \eqref{eq:t_geo_SAdS_Integral}. Expanding to $\mathcal{O}(r_s)$, and working in the region where $R \gg r_0$ we find that the geodesic time of flight is given by 
\begin{equation}
    \Delta t_{geo} = \pi L + 2r_s \coth^{-1} \left(\sqrt{1 + \frac{r_0^2}{L^2}}\right).
\end{equation}
The first term is the pure AdS contribution, while the second term is the gravitational time delay induced by the black hole. Since the distance of closest approach of the geodesic grows with $R$ according to \eqref{eq:r0_Sch_D=4}, in this limit the gravitational time delay vanishes and one recovers
\begin{equation}
    \lim_{R\rightarrow\infty} \Delta t_{geo} = \pi L
\end{equation}
showing that asymptotic AdS–Schwarzschild geodesics are completely insensitive to the presence of the black hole. This is similar to the behaviour found earlier in $D>4$ asymptotically-flat Schwarzschild spacetime.

In AdS backgrounds, null curves which probe the bulk experience a finite time delay with respect to the boundary. This delay can, in principle, compete with time advances induced by higher-order operators and allow for null curves through the bulk which are faster than the AdS-Schwarzschild geodesic at infinity. %
This makes asymptotically-AdS backgrounds a viable setting for deriving causality constraints on gravitational EFTs. However, this comes at a cost of an explicit dependence on the AdS regulator. If we try to take the flat space limit by taking $L \gg r_0$, the curve going through the bulk will be delayed relative to the AdS background by
\begin{equation}
    \Delta t_{geo}^{{\text{beyond GR}}} - \pi L \approx 2r_s \ln \left(\frac{2L}{r_0}\right) {- \Delta t^{adv}}
\end{equation}
{and the log} will dominate any finite time advance coming from beyond-GR effects in the strict limit $L\to\infty$. We therefore conclude that AdS causality bounds do not admit a smooth flat-space limit: the IR effects that obstruct asymptotic causality in flat space reappear as $L$ is taken to infinity even after the AdS regulator has rendered intermediate steps finite. 
\subsection{Hard Cut-Off: Working at Finite Distance}\label{finitedistance}
Another natural question is whether causality constraints can be obtained remaining at finite distance. In other words, can we cut off the IR logarithm at a finite $R$ so that it does not dominate against the EFT corrections? At this point readers might object that statements about superluminality at finite distance is prone to coordinate ambiguities \cite{Gao:2000ga}. 
 In $D>4$, there exists a natural class of coordinate systems such that the corrections to Minkowski decay as $1/r^{(D-3)}$, within which it is unambiguous to discuss superluminality and to compare with the asymptotic Minkowski metric \cite{Camanho:2014apa}. In $D = 4$, corrections to Minkowski go as $r_s/r$, provided one works in so-called `good' coordinates, and this fixes the coefficient of the logarithmic term in the time delay. Finite pieces, however, are coordinate dependent: even within the class of `good' coordinates, different choices change the time delay by finite pieces of order $r_s$\footnote{This can be seen, for example, by comparing the Shapiro time delay computed in standard Schwarzschild coordinates with that in isotropic coordinates.}. Consequently, in the regime where {$R$ is large enough so that} the logarithm is large compared to unity, the constant pieces are negligible and the result is robust; in particular, the GR time delay can be unambiguously compared with the time advance induced by EFT operators.

Consider \eqref{eq:Schvsstraight}, now without taking the limit $R \to \infty$ but rather taking $R$ finite and exponentially large such that the logarithm wins against the finite $\mathcal{O}(r_s)$ piece and we compare this time delay with the time advance induced by the $FFR$ term. Should a theory that admits a time-advance in this sense be considered pathological?

Typically, time advances are considered pathological because they are associated with the possibility of generating closed time-like curves (CTCs). The basic logic is that once a time advance exists, one can boost the solution so that in the new frame the signal propagates backwards in coordinate time. Completing the CTC then requires a second background boosted in the opposite direction. In order to superimpose the two boosted black holes, they must be displaced in the direction orthogonal to their motion. We now argue that this is in tension with the previous requirement of taking $R$ exponentially large, and hence that superluminality in the sense previously defined need not be connected to the generation of CTCs.
\begin{figure}
    \centering
    \includegraphics[width=0.7\linewidth]{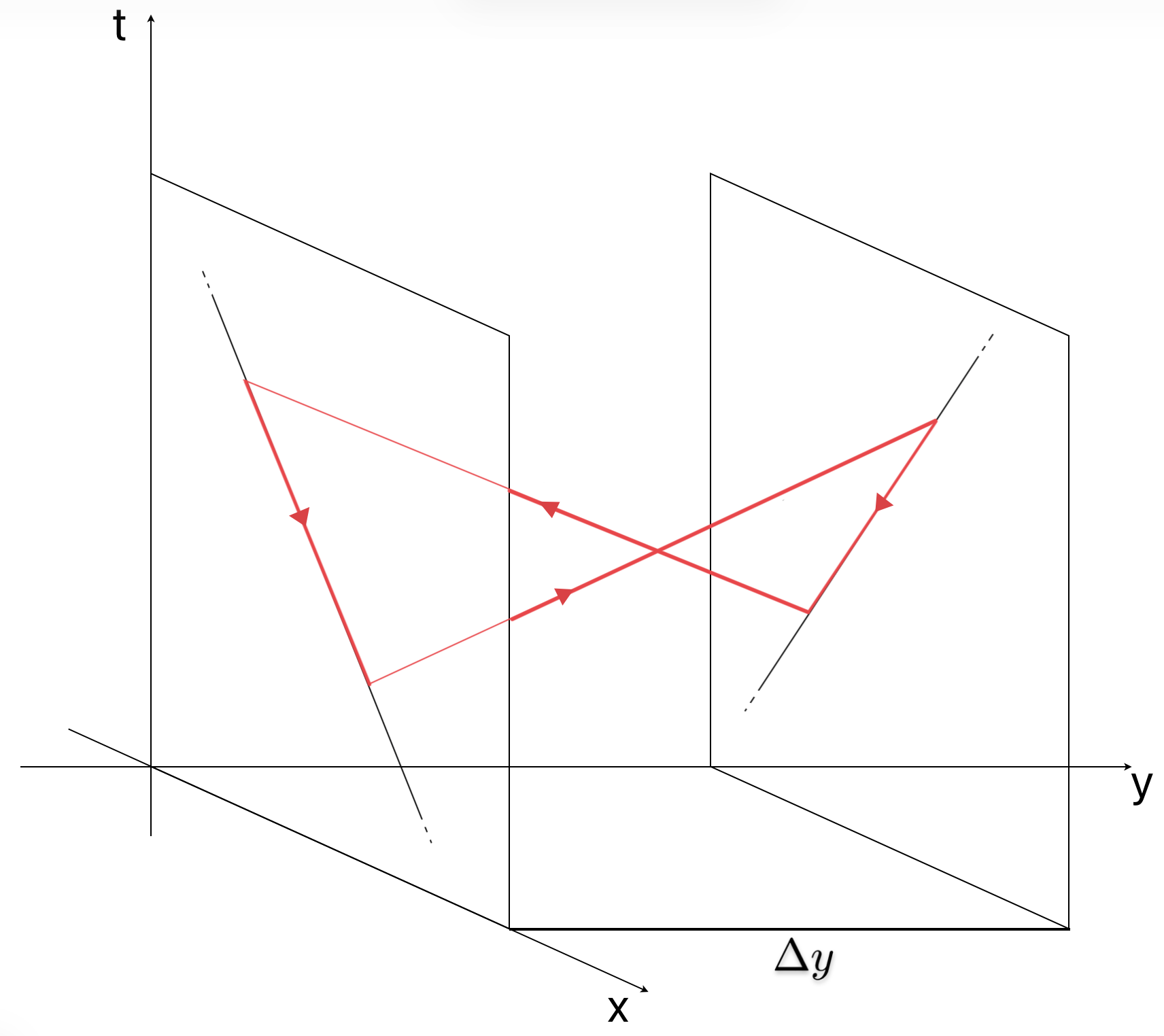}
    \caption{Superposition of two boosted solutions to construct a CTC.}
    \label{fig:ctc}
\end{figure}
In the black hole's rest frame, one must propagate an exponentially large distance $R \sim r_s e^N$, with $N \gg 1$, to accumulate a time advance of order $r_s$ (since we are working with exponential distance scales, we assume $r_s \sim \sqrt{\alpha}$). We now boost this solution so that in the new coordinates, the signal travels backwards in coordinate time: 
\begin{equation}\label{eq:SR}
    \Delta t' = \gamma(\Delta t - \beta \Delta x) \;.
\end{equation}
To get $\Delta t' <0$, a very large boost is required with $\beta \gtrsim 1 - e^{-N}$. This boost enhances the corrections to the metric in the orthogonal direction: the solution becomes similar to a shock-wave. 
At linear order, the boosted Schwarzschild solution reads
\begin{equation}
   h_{\mu\nu} = \frac{2GM}{r} (\eta_{\mu\nu} + 2 u_\mu u_\nu) \;.
\end{equation}
Here $u_\mu = (-\gamma, 0, 0, \gamma \beta)$ is the 4-velocity of the black hole and $r$ is the distance written in the new coordinates (we drop the prime on the coordinates for brevity): $r = \sqrt{x^2 +y^2 +\gamma^2 (z- \beta t)^2}$.
To superimpose two solutions, the transverse distance must be large enough that the metric is in the linear regime, $|h_{\mu\nu}| \ll 1$. This requires $\Delta y \gg r_s \gamma^2 \simeq r_s e^N$. To complete the CTC, the signal has to move in the transverse direction from one boosted black hole to the other, which requires a traversal time $\Delta t_\perp\gtrsim r_s e^{N}$. However, from \eqref{eq:SR}, the time advance from the boost grows only as $\gamma \simeq e^{N/2}$ which is parametrically smaller than $e^N$ and therefore insufficient to compensate. It does not seem possible to build a CTC, at least with a naive superposition of boosted solutions. 

In conclusion, in $D=4$, there is no obvious route from superluminality at large but finite distance to the possibility of building a CTC. Notice that in higher dimensions there is no need to consider exponentially large distances: it is enough to travel for few $r_s$ to see superluminality. Therefore there is no need to consider exponentially large boost factors $\gamma$ and the obstruction above disappears. This difference can also be seen at the level of the shock-wave solutions considered in \cite{Camanho:2014apa}. In $D=4$ the shock wave is completely delocalised in the orthogonal direction. Correspondingly, the construction of CTCs (their Appendix G) only works for $D>4$. Of course, it is still possible that CTCs may form in the non-linear regime where the two black holes cannot be simply superimposed.

\section{Concluding Remarks}

The main result of this work is a sharp dichotomy between four and higher spacetime dimensions regarding the {\it asymptotic} causal structure in gravitational EFTs.

In $D>4$, higher-derivative operators can generate time advances that survive in the asymptotic limit, leading to genuine violations of the causal structure determined by the background metric. These violations constrain the EFT cut-off to lie parametrically below the perturbative strong-coupling scale, $\Lambda \lesssim 1/\sqrt{\alpha}$.

In $D=4$, the difference in time of flight between null curves that 
get close to a black hole and the Schwarzschild-like geodesic diverges logarithmically with the distance of the endpoints to the source. This divergence ensures that, for any stationary asymptotically-Schwarzschild spacetime, prompt null curves between distant points never probe the near-horizon region where higher-derivative corrections are localised (Theorem~\ref{th:theorem}). The asymptotic causal structure is therefore universally that of Schwarzschild, regardless of the specific form of the irrelevant operators.

One might hope to evade this conclusion by remaining at finite, though exponentially large, distances, where the logarithm is large but not infinite. We have argued in Section~\ref{finitedistance} that the resulting time advances do not lead to closed timelike curves in any obvious way: the exponentially large boosts required to reverse the time ordering also expand the gravitational field in the transverse direction, preventing the naive superposition of two boosted black holes.

{A distinct, local form of causality constraint does, however, survive in $D = 4$. As discussed in Section~\ref{phenomenology}, the effective metric of the $FFR$ theory develops a hyperbolicity-breaking region at the scale $r_*$, below which it changes signature and the effective light cone ceases to be Lorentzian. Requiring the EFT to remain well-posed even close to the black hole therefore imposes a cut-off bound $\Lambda \lesssim 1/r_* \ll (M_P/\alpha)^{1/3}$, ruling out the strong-coupling scale as the UV cut-off. This constraint is local rather than asymptotic, and is therefore insensitive to the logarithmic IR divergence that obstructs asymptotic causality in $D = 4$. It is weaker than the asymptotic bounds available in $D > 4$ — where prompt null curves directly probe the near-horizon region — but it is free from coordinate ambiguities and does not require taking signals to infinite distance. Although we established this argument for the specific example of the $FFR$ operator, the mechanism is generic: whenever higher-derivative corrections drive the effective metric of some propagating mode to lose hyperbolicity at a scale $r_{\rm_{hyp}}$ on a black-hole background, a cut-off bound $\Lambda \lesssim 1/r_{\rm_{hyp}}$ follows. The bound applies regardless of spacetime dimension, relying only on the requirement that the EFT admit a well-posed initial-value problem.}

{So far, we have discussed causality via propagation on classical backgrounds. In the pioneering work of Adams \textit{et al.}~\cite{Adams:2006sv}, causality in EFTs was investigated through two complementary frameworks. The first is the classical-propagation approach taken in this work; the second is based on analyticity and positivity of the S-matrix. In the gravitational case, this second approach encounters the same infrared logarithm.}

The $t$-channel graviton pole obstructs the standard $t\to 0$ limit; this is resolved by writing smeared dispersion relations~\cite{Caron-Huot:2021rmr}. However, a logarithm from the graviton loop remains, requiring an IR regulator---for example, a cosmological background---to extract finite bounds on Wilson coefficients \footnote{Refs.~\cite{Beadle:2024hqg,Beadle:2025cdx,Chang:2025cxc} extend the methodology of~\cite{Caron-Huot:2021rmr} by consistently incorporating gravitational loops}. 
This logarithm in $D=4$ is the scattering-amplitude manifestation of the same IR effect we identify geometrically. The absence of IR divergences in $D>4$ means that standard positivity bounds derived from dispersion relations apply without obstruction, consistent with our time-delay results.

The S-matrix approach has the practical advantage of not requiring a verification that closed timelike curves can be constructed: a bound follows once the dispersive representation is established. For this reason, it has been used extensively, for example, in \cite{Caron-Huot:2022ugt, Henriksson:2022oeu, Hong:2023zgm, Albert:2024yap,Xu:2024iao, Dong:2024omo, Dong:2025dpy}.
On the other hand, it requires assumptions about the UV behaviour of the amplitude, in particular Regge boundedness, and the analyticity properties themselves become less transparent once Lorentz symmetry is broken by a cosmological regulator.

A recent proposal~\cite{Bellazzini:2025bay} offers a different strategy: replacing ordinary amplitudes with stripped amplitudes $\mathcal{M}_{\mathcal{E}}$, labelled by a detector energy resolution $\mathcal{E}$. Bounds derived from $\mathcal{M}_{\mathcal{E}}$ retain a residual $\log\mathcal{E}$ dependence but avoid the forward-limit divergence that plagues the gravitational S-matrix. Whether the positivity of $\mathcal{M}_{\mathcal{E}}$ has a direct spacetime interpretation---i.e., whether it encodes a well-defined condition on signal propagation---is an open question left for future work. Another open direction, complementary to the AdS analysis of Section~\ref{AdS}, is to extend the geometric analysis of this work to asymptotically de Sitter backgrounds\footnote{For a discussion on causality conditions in de Sitter spacetime see also \cite{Bittermann:2022hhy,McLoughlin:2025shj}.}, where the cosmological horizon provides a physically motivated IR regulator in $D=4$. 

\acknowledgments
We would like to thank D.~McLoughlin, M.~Mirbabayi, A.~Nicolis and F.~Serra for helpful discussions. 
We appreciate valuable discussions and feedback from participants of the `Constraining Effective Field Theories Without Lorentz' workshop in Trieste and `Theoretical Tools for Gravitational Wave Physics' workshop in Zurich. 
WPMcB would like to extend his gratitude to SNS for their hospitality while this work was in progress. 
BB is supported by the U.S. Department of Energy under grant number DESC0019470 and by the Heising-Simons Foundation `Observational Signatures of Quantum Gravity' collaboration grant 2024-5305. ET is partially supported by the Italian MIUR under contract 20223ANFHR (PRIN2022).

\appendix
\section{Derivation of the \texorpdfstring{$FFR$}{FFR} Effective Metrics} \label{app:FFR_effectivemetric}
In this appendix, we will derive the effective metrics for photons\footnote{We have omitted the derivation of the effective metrics of gravitons for brevity, but it can be shown that they follow the light cones of the background metric and thus do not dictate the causal structure of the theory.} described by the action \eqref{eq:L_FFR}, following closely the methods outlined in \cite{Davies:2021frz} and \cite{Reall:2021voz}.

The action that we are interested is \eqref{eq:L_FFR}, while the background metric is that of Schwarzschild \eqref{eq:schwarz_D}. The equation of motion for photons is given by
\begin{equation}
    E^{\mu} = \frac{\delta S}{\delta A_\mu} = \nabla_\nu F^{\mu\nu} - 4\alpha R^{\mu\nu}_{~~~\rho\sigma}\nabla_{\nu}F^{\rho\sigma} = 0.
\end{equation}
If we linearise the equation around the background solution, we obtain
\begin{equation}\label{characteristiceq}
    \frac{\partial E^\mu}{\partial (\partial_\alpha \partial_\beta A_\nu)} \partial_\alpha \partial_\beta \delta A_\nu + \dots = (g^{\mu \nu} g^{\alpha\beta} - g^{\mu \alpha}g^{\nu\beta} - 8\alpha R^{\mu\alpha\nu\beta})\partial_\alpha \partial_\beta \delta A_\nu + \dots = 0
\end{equation}
where the ellipses denote terms with fewer than two derivatives acting on $\delta A_\nu$. The operator in the front of $\partial_\alpha \partial_\beta \delta A_\nu$ determines how the linear perturbations propagate in this system. For a non-trivial propagation, this operator must be non-invertible. This motivates the definition of the \emph{principal symbol}
\begin{equation}
    P(x, \xi)^{\mu\nu} =  \frac{\partial E^\mu}{\partial (\partial_\alpha \partial_\beta A_\nu)} \xi_\alpha \xi_\beta = g^{\mu\nu} \xi^2 - \xi^{\mu}\xi^{\nu} - 8\alpha R^{\mu\alpha\nu\beta}\xi_{\alpha}\xi_\beta \label{eq:principal_symbol}
\end{equation}
with $\xi_\mu$ being a general covector. We can regard \eqref{eq:principal_symbol} as a matrix acting on gauge equivalence classes of polarisation vectors $t_\mu$. To see this, we note that
\begin{equation}
    P(x, \xi)^{\mu\nu} (t_\mu + \beta \xi_\mu) = P(x, \xi)^{\mu\nu} t_\mu.
\end{equation}
These classes correspond to physical polarisations. Thus, we say that $\xi_\mu$ is characteristic if there exists a non-zero equivalence class of $t_{\mu}$ satisfying the characteristic equation
\begin{equation}
     P(x, \xi)^{\mu\nu} t_{\mu} = 0.
\end{equation}
To solve this, we decompose the polarisation vector into a transverse component $\hat{t}_\mu$, such that $\hat{t}_\mu \xi^\mu = 0$, and a longitudinal one. The characteristic equation then becomes
\begin{align}
    P(x, \xi)^{\mu\nu} t_{\mu} &= P(x, \xi)^{\mu\nu} \hat{t}_{\mu}\\
    &= (g^{\mu\nu} \xi^2 - \xi^{\mu}\xi^{\nu} - 8\alpha R^{\mu\alpha\nu\beta}\xi_\alpha \xi_\beta)\hat{t}_{\mu}\\
    &= (g^{\mu\nu} \xi^2 - 8\alpha R^{\mu\alpha\nu\beta} \xi_\alpha \xi_\beta)\hat{t}_{\mu} = 0
\end{align}
which can be rearranged as
\begin{equation}
    8\alpha R_\mu^{~~\alpha\nu\beta} \xi_\alpha \xi_\beta \hat{t}_{\nu} = \xi^2 \hat{t}_\nu. \label{eq:eigen_char}
\end{equation}
The characteristics can be finally determined by solving the eigenvalue problem \eqref{eq:eigen_char}. The two $\xi^2 \neq 0$ solutions, in $D=4$ are:
\begin{equation}
    \xi^2=\frac{12 \alpha  r_S \left(\xi_2^2\sin^2{\theta}+ \xi_3^2\right)}{r^5\sin^2{\theta}\left(1+4 \frac{\alpha r_S}{r^3}\right)}\>,\>\xi^2=-\frac{12 \alpha  r_S \left(\xi_2^2\sin^2{\theta}+ \xi_3^2\right)}{r^5\sin^2{\theta}\left(1-8 \frac{\alpha r_S}{r^3}\right)}
\end{equation}
Since $\xi_{\mu}$ is a covector, the above norms correspond to the two effective metrics
\begin{equation}
    g^{\mu \nu}={\rm diag}(-1/f(r),r(r),c_V(r)\Omega_{AB}/r^2)\>,\>g^{\mu \nu}={\rm diag}(-1/f(r),r(r),c_S(r)\Omega_{AB}/r^2)
\end{equation}
respectively. The first eigenvalue's eigenvector corresponds to the vector (parity odd) polarisation, and it points purely in the $\theta$ direction if the photons' momentum is confined on the equatorial plane. Conversely, the second eigenvalue corresponds to the scalar (parity even) polarisation. 
This twofold splitting persists in $D > 4$ dimensions because the $D - 2$ transverse polarisations always decompose into two distinct eigenspaces: modes purely tangent to the $S^{D-2}$ sphere, and mode containing a radial projection. Due to the background symmetries, the eigenvalue problem \eqref{eq:eigen_char} remains highly degenerate, consistently yielding exactly two effective metrics, regardless of the number of spacetime dimensions.

\section{Null Curves in Spherically Symmetric Spacetimes} \label{app:general_metric}
A general $D$-dimensional static and spherically symmetric spacetime can be cast in the form:
\begin{equation}
    \dd s^2 = -f(r) \dd t^2 + f(r)^{-1} \dd r^2 + \frac{r^2}{c(r)} \dd \Omega_{D - 2}^2, \label{eq:gen_sphere_metric}
\end{equation}
with $\dd \Omega_{D-2}^2$ being the metric of $S^{D-2}$. A null curve is a trajectory such that $\dd s^2 = 0$.
\subsection{Null Geodesics}
Introducing the polar coordinates $(\theta_1, \theta_2,...,\theta_{D-3}, \phi)$ on $S^{D-2}$. The spherical symmetry of this spacetime allows us to confine all geodesics on the equatorial plane i.e. $\theta_1 = \theta_2 = ... = \theta_{D-3} = \pi/2$. Associated to two Killing vectors, $\partial/\partial t$ and $\partial/\partial \phi$, there exist two conserved quantities along geodesics:
\begin{align}
    e &= f(r)\dot{t} = 1, \label{eq:consv_e}\\
    \ell &= \frac{r^2}{c(r)} \dot{\phi} \label{eq:consv_l} \;.
\end{align}
We denote with a dot the derivative with respect to the affine parameter, and for null geodesics we can normalise the affine parameter such that $e = 1$. The conserved quantities allow us to rewrite $\dd s^2 = 0$ as the radial equation 
\begin{equation}
    \frac{1}{2} \dot{r}^2 + V_{\text{eff}} (r) = \frac{1}{2}
\end{equation}
in which
\begin{equation}
    V_{\text{eff}}(r) = \frac{\ell^2}{2r^2} f(r)c(r) \label{eq:v_eff}
\end{equation}
is the effective potential experienced by the photon moving along the null geodesic. The bending angle and the time of flight along the null geodesic connecting $p$ and $q$ are then given by the following integrals:
\begin{equation} \label{eq:app_deltaphiT_integrals}
\begin{split}
    \Delta \phi &= 2\int_{r_0}^{R} \dd r \frac{c(r)}{r} \left\{\frac{r^2}{r_0^2} f(r_0)c(r_0) - f(r)c(r)\right\}^{-\frac{1}{2}},\\
    \Delta t_{geo} &= 2\int_{r_0}^{R} \dd r \frac{1}{f(r)} \left\{1 - \frac{r_0^2}{r^2} \frac{f(r)c(r)}{f(r_0)c(r_0)}\right\}^{-\frac{1}{2}}
\end{split}
\end{equation}
in which $r_0$ is the \emph{distance of closest approach} of the null geodesic. The relationship between $r_0$ and the angular momentum $\ell$ is obtained setting $\dot r =0$ at $r =r_0$:
\begin{equation}
    \ell = \frac{r_0}{\sqrt{f(r_0)c(r_0)}}\,. \label{eq:l_r0}
\end{equation}
Because of our choice of affine parameter such that the energy is normalized to $e=1$, the physical interpretation of $\ell$ is that it is also the impact parameter $b$ of the trajectory.

\subsection{General Null Curves}
We might want to compare the time of flight of the null geodesic to that of a general null curve connecting the same two points. We will limit ourselves to curves on the equatorial plane, although the idea can be extended to a general curve. Consider a curve described by
\begin{equation}
    h(r,\phi) = 0.
\end{equation}
Along this curve, we have
\begin{equation}
    \dd h = \frac{\partial h}{\partial r} \dd r + \frac{\partial h}{\partial \phi} \dd \phi = 0.
\end{equation}
Solving for $\dd \phi$ and plugging into $\dd s^2 = 0$ we obtain
\begin{equation}
    \Delta t = \int \dd r \frac{1}{f(r)} \left\{1 + r^2 \frac{f(r)}{c(r)} \left[\frac{\partial h/\partial r}{\partial h/\partial \phi}\right]^2\right\}^{\frac{1}{2}}.
\end{equation}
An example is a straight line described by
\begin{equation}
    r \sin \phi = d.
\end{equation}
Following the previous steps, the time of flight along this straight line is
\begin{equation}
    \Delta t_{str} = 2 \int_{r_0}^{R} \dd r \frac{1}{f(r)} \left\{1 + \frac{f(r)}{c(r)} \frac{d^2}{r^2 - d^2}\right\}^{\frac{1}{2}}.
\end{equation}
\subsection{Perturbative Calculations: \texorpdfstring{$FFR$}{FFR} Theory} \label{app:perturb_calcs_FFR}
In this section, the calculations of each quantity is done with respect to the largest effective metric given by \eqref{eq:eff_metric_C}. We will restrict to the case $\alpha > 0$, but we note that the perturbative results—at least to first order—are independent of the sign of $\alpha$. For the Schwarzschild case, set $\alpha = 0$.
\subsubsection{Bending Angle}
For simplicity, we change the integration variable to $z = r_0/r$ such that $0 < z \leq 1$. The integral is given by
\begin{multline}
    \Delta \phi = 2 \int_{\frac{r_0}{R}}^{1} \dd z \frac{1}{z} \left[\frac{1 + \frac{c_1 \alpha r_s^{D-3}}{r_0^{D-1}} z^{D-1}}{1 - \frac{c_2 \alpha r_s^{D-3}}{r_0^{D-1}} z^{D-1}}\right] \cdot\\
    \cdot\left\{\frac{1}{z^2}\left[1 - \left(\frac{r_s}{r_0}\right)^{D-3}\right]\left[\frac{1 + \frac{c_1 \alpha r_s^{D-3}}{r_0^{D-1}}}{1 - \frac{c_2 \alpha r_s^{D-3}}{r_0^{D-1}}}\right] - \left[1 - \left(\frac{r_s}{r_0}z\right)^{D-3}\right]\left[\frac{1 + \frac{c_1 \alpha r_s^{D-3}}{r_0^{D-1}}z^{D-1}}{1 - \frac{c_2 \alpha r_s^{D-3}}{r_0^{D-1}}z^{D-1}}\right]\right\}^{-\frac{1}{2}}.
\end{multline}
Expand to $\mathcal{O}\left[\left(\frac{r_s}{r_0}\right)^{D-3},\left(\frac{\alpha r_s^{D-3}}{r_0^{D-1}}\right)\right]$, neglecting higher-order terms,
\begin{equation}
    \Delta \phi \approx 2 \int_{\frac{r_0}{R}}^{1} \dd z \left\{ \frac{1}{(1 - z^2)^\frac{1}{2}} + \frac{1}{2} \left(\frac{r_s}{r_0}\right)^{D-3} \frac{1 - z^{D-1}}{(1-z^2)^\frac{3}{2}} - \frac{1}{2}(c_1 + c_2) \frac{\alpha r_s^{D-3}}{r_0^{D-1}} \frac{1 - 2z^{D-1} + z^{D+1}}{(1 - z^2)^\frac{3}{2}}\right\}
\end{equation}
and integrate to obtain
\begin{equation}
    \Delta \phi \approx \pi - 2 \sin^{-1} \left(\frac{r_0}{R}\right) + \left(\frac{r_s}{r_0}\right)^{D-3} \sqrt{\pi} (D - 1) \left[1 - \frac{\alpha}{r_0^2} (c_1 + c_2)\frac{D-2}{D-1}\right] \frac{\Gamma(\frac{D}{2})}{2\Gamma(\frac{D+1}{2})}  \;.\label{eq:bending_angle_full}
\end{equation}
In the last expression we also expanded assuming $R \gg r_0$ and $R \gg r_s$.
\subsubsection{Geodesic Time of Flight}
The integral is given by
\begin{multline}
    \Delta t_{geo} = 2r_0 \int_\frac{r_0}{R}^{1} \dd z \frac{1}{z^2} \left[1 - \left(\frac{r_s}{r_0}z\right)^{D-3}\right]^{-1} \\
    \left\{1 - z^2 \frac{\left[1 - \left(\frac{r_s}{r_0}z\right)^{D-3}\right]}{\left[1 - \left(\frac{r_s}{r_0}\right)^{D-3}\right]}\left[\frac{1 + \frac{c_1 \alpha r_s^{D-3}}{r_0^{D-1}}z^{D-1}}{1 - \frac{c_2 \alpha r_s^{D-3}}{r_0^{D-1}} z^{D-1}}\right]\left[\frac{1 + \frac{c_1 \alpha r_s^{D-3}}{r_0^{D-1}}}{1 - \frac{c_2 \alpha r_s^{D-3}}{r_0^{D-1}}}\right]^{-1}\right\}^{-\frac{1}{2}}.
\end{multline}
Expand to $\mathcal{O}\left[\left(\frac{r_s}{r_0}\right)^{D-3},\left(\frac{\alpha r_s^{D-3}}{r_0^{D-1}}\right)\right]$, neglecting higher-order terms,
\begin{equation}
    \Delta t_{geo} \approx 2r_0 \int_{\frac{r_0}{R}}^1 \dd z \left\{\frac{1}{z^2(1 - z^2)^\frac{1}{2}} + \frac{1}{2}\left(\frac{r_s}{r_0}\right)^{D-3} \frac{2z^{D-5} - 3z^{D-3} + 1}{(1-z^2)^{3/2}} - \frac{1}{2}(c_1 + c_2)\frac{\alpha r_s^{D-3}}{r_0^{D-1}} \frac{1-z^{D-1}}{(1-z^2)^{\frac{3}{2}}}\right\}\label{eq:t_geo_approx_integral}
\end{equation}
and integrate to obtain
\begin{equation}
    \Delta t_{geo} = 2\sqrt{R^2 - r_0^2} + r_0 \left(\frac{r_s}{r_0}\right)^{D-3} \sqrt{\pi} \frac{(D-1)(D-3)}{D-4}\left[1 - \frac{\alpha}{r_0^2} (c_1 + c_2) \frac{D-4}{D-3}\right] \frac{\Gamma\left(\frac{D}{2}\right)}{2\Gamma\left(\frac{D+1}{2}\right)} \label{eq:D_t_geo}
\end{equation}
This is valid for $D>4$. As for $D = 4$, the result diverges due to the log-divergence of the Schwarzschild time delay as we take $R\rightarrow \infty$. We can handle the $D=4$ case separately by plugging $D=4$ before integrating \eqref{eq:t_geo_approx_integral}
\begin{equation}
    \Delta t_{geo} \approx 2\sqrt{R^2 - r_0^2} + r_s\left[2  \ln \left(\frac{2R}{r_0}\right) + 1\right] - \frac{24 \alpha r_s}{r_0^2}. \label{eq:4d_t_geo}
\end{equation}
\subsubsection{Straight Line Time of Flight} 
For simplicity, we change the integration variable to $z = d/r$ such that $0 < z \leq 1$. The integral is then given by
\begin{multline}
    \Delta t_{str} = 2d \int_{\frac{d}{R}}^{1} \dd z \frac{1}{z^2} \left[1 - \left(\frac{r_s}{d}z\right)^{D-3}\right]^{-1} \\
    \left\{1 + \frac{z^2}{1 - z^2} \left[1 - \left(\frac{r_s}{d}z\right)^{D-3}\right] \left[\frac{1 - \frac{c_2 \alpha r_s^{D-3}}{d^{D-1}} z^{D-1}}{1 + \frac{c_1 \alpha r_s^{D-3}}{d^{D-1}}z^{D-1}}\right]\right\}^{\frac{1}{2}}.
\end{multline}
Expand to $\mathcal{O}\left[\left(\frac{r_s}{d}\right)^{D-3},\left(\frac{\alpha r_s^{D-3}}{d^{D-1}}\right)\right]$ (neglecting higher-order terms),
\begin{equation}
    \Delta t_{str} \approx 2d \int_{\frac{d}{R}}^{1} \dd z \left\{\frac{1}{z^2 (1 - z^2)^{\frac{1}{2}}} + \left(\frac{r_s}{d}\right)^{D-3} \frac{z^{D-5} - \frac{1}{2}z^{D-3}}{(1 - z^2)^{\frac{1}{2}}} -\frac{1}{2}(c_1 + c_2) \frac{\alpha r_s^{D-3}}{d^{D-1}} \frac{z^{D-1}}{(1 - z^2)^{\frac{1}{2}}}\right\}, \label{eq:t_str_approx_integral}
\end{equation}
and integrate to obtain
\begin{equation}
    \Delta t_{str} \approx 2\sqrt{R^2 - d^2} + d \left(\frac{r_s}{d}\right)^{D-3} \sqrt{\pi} \frac{D-1}{D-4} \left[1 - \frac{\alpha}{d^2} (c_1 + c_2)\frac{D-4}{D-1}\right] \frac{\Gamma\left(\frac{D}{2}\right)}{2\Gamma\left(\frac{D+1}{2}\right)}. \label{eq:D_t_str}
\end{equation}
This is valid for $D>4$. As for $D = 4$, the result diverges due to the log-divergence of the Schwarzschild time delay as we take $R\rightarrow \infty$. We can handle the $D=4$ case separately by plugging $D=4$ before integrating \eqref{eq:t_str_approx_integral}
\begin{equation}
    \Delta t_{str} \approx 2\sqrt{R^2 - d^2} + r_s\left[2  \ln \left(\frac{2R}{d}\right) - 1\right] - \frac{8 \alpha r_s}{d^2}. \label{eq:4d_t_str}
\end{equation}

It is worth mentioning here that the results \eqref{eq:bending_angle_full} for $D=4$ and \eqref{eq:4d_t_str} for the $FFR$ results matches with S-matrix calculations (see, e.g., (4.74) and (4.75) of \cite{AccettulliHuber:2020oou}).
\subsection{Perturbative Calculations: AdS-Schwarzschild}
In this section, the calculations of each quantity is done with respect to the $D=4$ AdS-Schwarzschild metric given by \eqref{eq:AdS_Schwarz_Metric}. 
\subsubsection{Bending Angle}
The integral is given by
\begin{align}
    \Delta \phi &= 2\int_{r_0}^{R} \dd r \frac{1}{r} \left\{\frac{r^2}{r_0^2} \left(1 - \frac{r_s}{r_0} + \frac{r_0^2}{L^2}\right) - \left(1 - \frac{r_s}{r} + \frac{r^2}{L^2}\right)\right\}^{-\frac{1}{2}}\\
    &= 2\int_{r_0}^{R} \dd r \frac{1}{r} \left\{\frac{r^2}{r_0^2} \left(1 - \frac{r_s}{r_0}\right) - \left(1 - \frac{r_s}{r}\right)\right\}^{-\frac{1}{2}} \label{eq:bending_angle_SAdS_Integral}
\end{align}
which is identical to the asymptotically flat Schwarzschild case, meaning that the previous results (with $D = 4$) still apply i.e.
\begin{equation}
    \Delta \phi \approx \pi - 2 \sin^{-1} \left(\frac{r_0}{R}\right) + 2\frac{r_s}{r_0}. \label{eq:bending_angle_AdsSchwarz}
\end{equation}
\subsubsection{Geodesic Time of Flight}
The integral is given by
\begin{equation}
    \Delta t_{geo} = 2\int_{r_0}^{R} \dd r \left(1 - \frac{r_s}{r} + \frac{r^2}{L^2}\right)^{-1} \left\{1 - \frac{r_0^2}{r^2} \left(1 - \frac{r_s}{r} + \frac{r^2}{L^2}\right)\left(1 - \frac{r_s}{r_0} + \frac{r_0^2}{L^2}\right)^{-1}\right\}^{-\frac{1}{2}}. \label{eq:t_geo_SAdS_Integral}
\end{equation}
Expanding to $\mathcal{O}\left(r_s\right)$ (neglecting higher-order terms),
\begin{multline}
    \Delta t_{geo} =  2\int_{r_0}^{R} \dd r \frac{1}{(1 + \frac{r^2}{L^2})} \left\{1 - \frac{r_0^2}{r^2} \frac{1 + \frac{r^2}{L^2}}{1 + \frac{r_0^2}{L^2}}\right\}^{-\frac{1}{2}} + \\
    \left(\frac{1}{r\left(1+\dfrac{r^{2}}{L^{2}}\right)^{2}
  \sqrt{\dfrac{L^{2}\left(r^{2}-r_{0}^{2}\right)}{r^{2}\left(L^{2}+r_{0}^{2}\right)}}
}
+
\frac{
  r r_{0}
  \sqrt{\dfrac{L^{2}\left(r^{2}-r_{0}^{2}\right)}{r^{2}\left(L^{2}+r_{0}^{2}\right)}}
  \left(L^{2}+r^{2}+r r_{0}+r_{0}^{2}\right)
}{
  2 L^{2}\left(1+\dfrac{r^{2}}{L^{2}}\right)
  (r-r_{0})(r+r_{0})^{2}
}
\right) r_{s},
\end{multline}
and integrate to obtain
\begin{equation}
    \Delta t_{geo} = 2L \tan^{-1} \left(\frac{\sqrt{R^2 - r_0^2}}{\sqrt{L^2 + r_0^2}}\right) + 2r_s \coth^{-1} \left(\sqrt{1 + \frac{r_0^2}{L^2}}\right). \label{eq:delta_t_geo_AdSSch}
\end{equation}

\section{Error Bounds on Time of Flight}
\label{app:bounds}
In our proof in Section \ref{sec:theorem}, we argued that outside the spherical region $K$ we can treat the effective metric as Schwarzschild up to a minor correction \eqref{eq:thm_bounds}. Astute readers might be concerned that when taking $p$ and $q$ to infinity, the minor corrections might add up along the path and introduce errors that could grow without bound, spoiling the proof altogether. 

In this appendix we will show that the error in the time of flight is finite and bounded for any relevant null curve, when computing it using the Schwarzschild metric $g_{\mu\nu}^{Sch}$ instead of using the correct effective light cone $\ell^{eff}_{\mu\nu}$ (which often comes from an effective metric $g^{eff}_{\mu\nu}$, although we will not assume this). This is true even accounting for the fact that geodesic curves of the effetive metric do not precisely coincide with those of the Schwarzschild metric. Qualitatively, this means that the computation done in Section \ref{sec:theorem} correctly captures the logarithmic term crucial to the proof, while any other terms that might be introduced by $g^{eff}_{\mu\nu}$ will be finite and sub-leading in the limit as $p$ and $q$ approach infinity. This proof also applies to $D>4$, although in this case the logarithmic term is absent and finite corrections become relevant.

\begin{theorem}

    Let $\Gamma$ be a prompt causal curve of the effective light cone $\ell^{eff}$ obeying \eqref{eq:thm_bounds}, lying entirely outside the ball $\Sigma$ and connecting two space points $p$ and $q$. Let $\Gamma^s$ be the null geodesic of the Schwarzschild metric connecting $p$ and $q$.
    We are interested in bounding the error we make when approximating the time of flight $\Delta t_\Gamma$ of $\Gamma$ with the time of flight $\Delta t_{\Gamma^s}$ of $\Gamma^s$, the latter being more easily calculable. We will prove
    \begin{equation} \label{eqappC:goal}
        |\Delta t_{\Gamma} - \Delta t_{\Gamma^s}| \leq C
    \end{equation}
    where $C$ is a constant.
\end{theorem}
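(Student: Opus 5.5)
The bound is proved in two directions, lower and upper. The key input is the definition \eqref{eq:thm_bounds}: along any curve causal for $\ell^{eff}$ outside $\Sigma$, the Schwarzschild norm of the velocity is at most $c/r^{1+\epsilon}$. Write the Schwarzschild line element on a curve parametrised by $t$ as $-f\,\dd t^2 + f^{-1}\dd r^2 + r^2\dd\Omega^2$ and set $v^2 = f^{-2}\dot r^2 + f^{-1}r^2\dot\Omega^2$. Here $v$ is the Schwarzschild ``optical speed'', i.e.\ the speed in the optical (Fermat) metric $\bar g = f^{-2}\dd r^2 + f^{-1}r^2\dd\Omega^2$. The condition then reads $f(v^2-1)\le c/r^{1+\epsilon}$, so $v \le 1 + c'/r^{1+\epsilon}$ for $r>\sigma$, since $f$ is bounded below outside $\Sigma$. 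This is the only place the hypothesis enters. Everything reduces to comparing optical lengths: in a static spacetime the time of flight of a Schwarzschild-null curve equals its $\bar g$-length, and $\Gamma^s$ realises the minimal $\bar g$-length $L_{\min}(p,q)$ among curves outside $\Sigma$. This is Fermat's principle as quoted earlier, and it holds for $p,q$ far enough out, where the relevant Schwarzschild geodesic avoids $\Sigma$ by \eqref{eq:r0_Sch_Line}.

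\emph{Lower bound.} For $\Gamma$, parametrised by $t$, we have $\Delta t_\Gamma = \int \dd t \ge \int \frac{v}{1+c'/r^{1+\epsilon}}\,\dd t$. So $\Delta t_\Gamma \ge L_{\bar g}(\Gamma) - c'\int r^{-1-\epsilon}\, v\,\dd t$. The last integral is $\int_\Gamma r^{-1-\epsilon}\,\dd\bar s$. I will show it is uniformly bounded by writing $\dd\bar s \ge |\dd r|/f$ and splitting $\Gamma$ into monotone radial pieces. The main obstacle is exactly here: one needs $\int_\Gamma r^{-1-\epsilon}\dd\bar s$ bounded independently of $p,q$, which requires controlling the number of radial turning points and the angular excursions of $\Gamma$. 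For the purely angular part I bound $\dd\bar s \lesssim r\,\dd\Omega$, giving $\int r^{-\epsilon}\,\dd\Omega$.

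I would handle this as follows. A \emph{prompt} curve cannot afford long detours, since its time is at most $\Delta t_{\Gamma^s}$, which is $L_{\min}$ plus small corrections. So $\Gamma$ stays in a region where its radial profile is essentially that of $\Gamma^s$: one inward and one outward leg, with total angular sweep at most $\pi + O(1)$. Then the radial part gives $2\int_{\sigma}^\infty r^{-1-\epsilon}f^{-1}\dd r<\infty$. The angular part gives $\int r^{-\epsilon}\,\dd\Omega$, which is bounded because only a bounded total angle is swept near any given radius. Concretely, I would first establish a crude a priori estimate. Each extra radial oscillation between $r$ and $2r$ costs optical length of order $r$, which exceeds the total available slack once that slack is shown to be $O(1)$. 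I would bootstrap this, noting that $\epsilon>0$ makes the weights summable over dyadic shells.

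\emph{Upper bound.} Here we need $\Delta t_\Gamma \le \Delta t_{\Gamma^s}+C$. Promptness of $\Gamma$ gives $\Delta t_\Gamma\le$ the time of any $\ell^{eff}$-causal curve from $p$ to $q$. It therefore suffices to exhibit an $\ell^{eff}$-causal curve following the spatial track of $\Gamma^s$ but slowed down. If the effective cone were narrower than Schwarzschild's, the time along this track would be at most $\int \bar{\dd s}\,(1+c''/r^{1+\epsilon})$. This requires the symmetric statement that $\ell^{eff}$ contains Schwarzschild directions tilted by $O(r^{-1-\epsilon})$, which I would take as part of the asymptotic matching (in the paper's examples the deviation is two-sided). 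The same summability along $\Gamma^s$, whose profile is explicit, gives the constant.

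Combining the lower bound $\Delta t_\Gamma \ge L_{\min} - C_1 = \Delta t_{\Gamma^s}-C_1$ with the upper bound gives \eqref{eqappC:goal}.
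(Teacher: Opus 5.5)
Your skeleton is the paper's. $L_{\bar g}(\Gamma)$ is the time of flight of its auxiliary curve $\hat\Gamma$ (the track of $\Gamma$ traversed at Schwarzschild light speed), your lower bound is the integrated form of \eqref{eq:dtG-dtGhat}, and Fermat gives $L_{\bar g}(\Gamma)\ge\Delta t_{\Gamma^s}$. For the upper bound, the paper takes the effective cone to enclose the Schwarzschild one, so $\Gamma^s$ is admissible and $\Delta t_\Gamma\le\Delta t_{\Gamma^s}$. You are right that \eqref{eq:thm_bounds} is one-sided; your tilted-cone assumption is a slightly weaker form of the same extra input. The real divergence is the uniform bound on $I=\int_\Gamma r^{-1-\epsilon}\,\dd\bar s$. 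The paper does not derive it: it assumes the regularity condition \eqref{eqappC:regularity_condition}. You try to derive it from promptness.

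As sketched, that derivation is circular and does not close in general. What you control is the slack $E=L_{\bar g}(\Gamma)-L_{\min}\le c'I+C_2$. The shortcut argument (replace the part of $\Gamma$ between its first entry into and last exit from $\{r<\rho\}$ by an arc of the sphere $r=\rho$) gives $\lambda(\rho):=\bar\ell(\Gamma\cap\{r<\rho\})\le E+C\rho$, hence $I\le\sigma^{-1-\epsilon}E+C''$. Combining yields $E\le c'\sigma^{-1-\epsilon}E+\dots$, which bounds $E$ only if $c'\sigma^{-1-\epsilon}<1$. Nothing in Definition~\ref{def:asymp_Schwarz} makes the allowed superluminality near $r=\sigma$ small. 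That is exactly where extra optical length can be paid for by extra speed; dyadic summability only helps at large $r$. Your claim of one inward and one outward leg with sweep $\pi+O(1)$ is likewise asserted, not proved.

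The missing idea is to use promptness locally rather than through the global slack. By stationarity, every sub-arc of a prompt curve is prompt between its endpoints; otherwise splice in a faster arc and time-translate the remainder. Apply this to the sub-arc from the first entry to the last exit of $\{r<\rho\}$ (starting or ending at $p$ or $q$ if they lie inside). Compare it with radial segments plus an arc at $r=\rho$, which is admissible under the cone-containment input and has time at most a constant times $\rho$. Outside $\Sigma$ the optical speed of $\Gamma$ never exceeds $V_{\max}=\bigl(1+c/[(1-r_s/\sigma)\sigma^{1+\epsilon}]\bigr)^{1/2}$. Hence $\lambda(\rho)\le A\rho$ with $A$ independent of $p,q$, and
\[
I=\int_\sigma^\infty\rho^{-1-\epsilon}\,\dd\lambda(\rho)\le(1+\epsilon)\int_\sigma^\infty\rho^{-2-\epsilon}\lambda(\rho)\,\dd\rho\le\frac{(1+\epsilon)A}{\epsilon\,\sigma^{\epsilon}}\,,
\]
with no need to count turning points or control angular sweep. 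With this replacement your argument proves the statement without the paper's regularity hypothesis, though it still needs the cone-containment (or tilted-cone) input for the upper bound and the comparison arcs.
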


\begin{proof}

    It is convenient to relate $\Gamma$ and $\Gamma^s$ by defining a third curve $\hat\Gamma$, null in the Schwarzschild light cone and tracing the path of $\Gamma$ in space
    \begin{equation}
        \hat{\Gamma}^i(x) \equiv \Gamma^i(x), \quad g^{Sch}_{\mu\nu} \frac{\dd \hat{\Gamma}^\mu}{\dd t}\frac{\dd \hat{\Gamma}^\nu}{\dd t} \equiv 0\,. \label{eq:def_gamma_Sch}
    \end{equation}
    We will prove \eqref{eqappC:goal} by putting together three observations. The first is that $\hat\Gamma$, while causal, is not generically a Schwarzschild geodesic contrary to $\Gamma^s$, so
    \begin{equation}
        \label{eqappC:fact1}
        \Delta t_{\Gamma^s}\le \Delta t_{\hat\Gamma}
    \end{equation}
    because geodesics are prompt in the Schwarzschild metric.

    The second observation is that, according to \eqref{eq:thm_bounds}, $\ell^{eff}$ allows small superluminalities compared to the the Schwarzschild light cone. As a consequence the set of trajectories causal in $\ell^{eff}$ enlarges that of Schwarzschild, so from the assumption that $\Gamma$ is prompt we get
    \begin{equation}
    \label{eqappC:fact2}
        \Delta t_{\Gamma}\le \Delta t_{\Gamma^s}\,.
    \end{equation}

    The third inequality to conclude the argument will be of the form $\Delta t_{\Gamma^s}-C\le \Delta t_\Gamma$ for some constant $C>0$, independent of the curve $\Gamma$ and its endpoints.
    
    To proceed we define $u_\Gamma$ as
    \begin{equation}
        u_{\Gamma} = g_{\mu\nu}^{Sch} \frac{\dd \Gamma^\mu}{\dd t}\frac{\dd \Gamma^\nu}{\dd t}, \quad 0 < u_\Gamma < \frac{c}{r^{1 + \epsilon}}. \label{eq:u_Gamma}
    \end{equation}
    where the bound on $u_\Gamma$ comes from \eqref{eq:thm_bounds}.
    We can solve for $\dd t_{\Gamma}$ to obtain\footnote{We are writing everything in terms of differentials to shorten the notation. The more mathematically inclined can rewrite it as an integral over a certain dummy parameter if one wishes to do so.}
    \begin{equation}
        \dd t_\Gamma = \sqrt{\frac{g_{ij}^{Sch} \dd \Gamma^i \dd \Gamma^j}{-g_{tt}^{Sch} + u_\Gamma}}. \label{eq:dt_Gamma}
    \end{equation}
    Given that $-g_{tt}^{Sch} > 0$ and $u_{\Gamma} > 0$, the denominator of \eqref{eq:dt_Gamma} can be bounded using
    \begin{equation}
        \sqrt{\frac{1}{-g^{Sch}_{tt}}} - \sqrt{\frac{1}{-g^{Sch}_{tt} + u_\Gamma}} \leq \frac{u_\Gamma}{2 (-g_{tt}^{Sch})^{\frac{3}{2}}}
    \end{equation}
    which gives
    \begin{equation}
        \dd t_\Gamma \geq \sqrt{\frac{g_{ij}^{Sch} \dd \Gamma^i \dd \Gamma^j}{-g_{tt}^{Sch}}} - \frac{u_\Gamma}{2} \sqrt{\frac{g_{ij}^{Sch} \dd \Gamma^i \dd \Gamma^j}{(-g_{tt}^{Sch})^3}}. \label{eq:dt_Gamma2}
    \end{equation}
    The first term in the right hand side of \eqref{eq:dt_Gamma2} is the differential time of flight of $\hat{\Gamma}$ as defined in \eqref{eq:def_gamma_Sch}, while the second term is the deviation that should be bounded. Thus, we can rewrite \eqref{eq:dt_Gamma2} as
    \begin{equation}
        \dd t_{\Gamma} - \dd t_{\hat{\Gamma}} \geq - \frac{u_\Gamma}{2} \sqrt{\frac{g_{ij}^{Sch} \dd \Gamma^i \dd \Gamma^j}{(-g_{tt}^{Sch})^3}}. \label{eq:dtG-dtGhat}
    \end{equation}
    Next, we consider the right hand side of \eqref{eq:dtG-dtGhat}. The error term can be written using \eqref{eq:gen_sphere_metric} as
    \begin{align}
        - \frac{u_\Gamma}{2} \sqrt{\frac{g_{ij}^{Sch} \dd \Gamma^i \dd \Gamma^j}{(-g_{tt}^{Sch})^3}} &= - \frac{u_\Gamma}{2} \frac{1}{[f(r)]^2}\sqrt{\dd r^2 + r^2 f(r) \dd \Omega_2^2} \, .
    \end{align}
    We will now bound this expression from below. Using the fact that for Schwarzschild
    \begin{equation}
        1 - \frac{r_s}{\sigma} \leq f(r) \leq 1\, , \quad
        r>\sigma\, ,
    \end{equation}
    along with \eqref{eq:u_Gamma}, the inequality \eqref{eq:dtG-dtGhat} becomes
    \begin{equation}
         \dd t_{\Gamma} - \dd t_{\hat{\Gamma}} \geq - \frac{1}{2\left(1 - \frac{r_s}{\sigma}\right)^2} \frac{c}{r^{1 + \epsilon}} \sqrt{\dd r^2 + r^2 (\dd \theta^2 + \sin^2 \theta \dd \phi^2)}.
    \end{equation}
    where we recognise the Euclidean space metric $\dd s^2$ under the square root. Integrating both sides from $p$ to $q$ we derive
    \begin{equation}
        \label{eqappC:dtG-dtGhat_integrated}
        \Delta t_\Gamma - \Delta t_{\hat\Gamma} \ge - \frac{c}{2\left(1-\frac{r_s}{\sigma}\right)^2}\int \frac{1}{r^{1+\epsilon}}\dd s
    \end{equation}
    The last integral can be bounded under mild technical assumptions on the curve $\Gamma$, which we expect to be fulfilled for prompt curves. In particular we expect $\Gamma$ to be divided into two arcs, one approaching the origin and receding from it, both obeying
    \begin{equation}
    \label{eqappC:regularity_condition}
        r(s) \ge k (1+s)^p\, , \quad k>0, \, p> \frac{1}{1+\epsilon}\, .
    \end{equation}
    This assumption qualitatively requires that $\Gamma$ always possesses a non-vanishing radial component of motion. The integral in \eqref{eqappC:dtG-dtGhat_integrated} can now be bounded by putting the end points of the curve at infinity, giving
    \begin{equation}
        \int \frac{\dd s}{r^{1+\epsilon}} \le \frac{2}{k^{1+\epsilon}}\int_0^\infty \frac{\dd s}{(1+s)^{p(1+\epsilon)}} \le \frac{2}{k^{1+\epsilon} (p+p\epsilon-1)}\, .
    \end{equation}
    This result, along with \eqref{eqappC:dtG-dtGhat_integrated} and \eqref{eqappC:fact1}, shows that
    \begin{equation}
        \Delta t_\Gamma \ge \Delta t_{\hat\Gamma}-C \ge t_{\Gamma^s} -C\, , \quad
        C = \frac{c}{\left(1-\frac{r_s}{\sigma}\right)^2k^{1+\epsilon} (p+p\epsilon-1)}\, .
    \end{equation}
    Combining this inequality with \eqref{eqappC:fact2} we finally conclude that
    \begin{equation}
    \Delta t_{\Gamma^s}-C \le \Delta t_\Gamma \le \Delta t_{\Gamma^s}
    \Rightarrow |\Delta t_\Gamma-\Delta t_{\Gamma^s}|\le C\, .
\end{equation}
    
\end{proof}

\bibliographystyle{JHEP} %
\bibliography{draft_bib}

\end{document}